\newcommand{\etal}{et al.}
\newcommand{\R}{\ensuremath{\mathbb{R}}\xspace}
\newcommand{\mkmcal}[1]{\ensuremath{\mathcal{#1}}\xspace}
\newcommand{\W}{\mkmcal{W}}
\newcommand{\E}{\mkmcal{E}}
\newcommand{\A}{\mkmcal{A}}
\newcommand{\U}{\mkmcal{U}}
\renewcommand{\L}{\mkmcal{L}}
\newcommand{\level}{\textit{l}}
\DeclareMathOperator{\polylog}{polylog}
\DeclareMathOperator{\interior}{int}
\newcommand{\ol}[1]{\overline{#1}}
\newcommand{\pluseq}{\mathrel{+}=}
\newcommand{\North}{\ensuremath{\mathrm{North}}\xspace}
\newcommand{\East}{\ensuremath{\mathrm{East}}\xspace}
\newcommand{\South}{\ensuremath{\mathrm{South}}\xspace}
\newcommand{\West}{\ensuremath{\mathrm{West}}\xspace}
\newcommand{\splt}{\ensuremath{\mathrm{split}}}
\newcommand{\stab}{\ensuremath{\mathrm{stab}}}
\newcommand{\argmax}{\ensuremath{\mathrm{argmax}}}
\newcommand{\ply}{\ensuremath{\mathrm{ply}}}
\newcommand{\CH}{\ensuremath{\mathit{CH}}\xspace}
\title{Robust Bichromatic Classification using Two Lines}
\author{Erwin Glazenburg}{Utrecht University, The Netherlands}{e.p.glazenburg@uu.nl}{https://orcid.org/0009-0003-6645-4240}{}
\author{Thijs van der Horst}{Utrecht University, The Netherlands \and TU Eindhoven, The Netherlands}{t.w.j.vanderhorst@uu.nl}{https://orcid.org/0009-0002-6987-4489}{}
\author{Tom Peters}{TU Eindhoven, The Netherlands}{t.peters1@tue.nl}{https://orcid.org/0000-0002-2702-7532}{}
\author{Bettina Speckmann}{TU Eindhoven, The Netherlands}{b.speckmann@tue.nl}{https://orcid.org/0000-0002-8514-7858}{}
\author{Frank Staals}{Utrecht University, The Netherlands}{f.staals@uu.nl}{https://orcid.org/0009-0004-8522-1351}{}
\authorrunning{E. Glazenburg, T. van der Horst, T. Peters, B. Speckmann, and F. Staals}
\keywords{Geometric Algorithms, Separating Line, Classification, Bichromatic, Duality}
\begin{document}

\maketitle

\begin{abstract}
  Given two sets $R$ and $B$ of $n$ points in the plane, we
  present efficient algorithms to find a two-line linear classifier
  that best separates the ``red'' points in $R$ from the ``blue''
  points in $B$ and is robust to outliers. More precisely, we find a
  region $\W_B$ bounded by two lines, so either a halfplane, strip,
  wedge, or double wedge, containing (most of) the blue points $B$,
  and few red points. Our running times vary between optimal
  $O(n\log n)$ and around $O(n^3)$, depending on the type of
  region $\W_B$ and whether we wish to minimize only red outliers,
  only blue outliers, or both.
\end{abstract}

\section{Introduction}
\label{sec:Introduction}

Let $R$ and $B$ be two sets of at most $n$ points in the plane. Our
goal is to best separate the ``red'' points $R$ from the ``blue''
points $B$ using at most two lines. That is, we wish to find a region
$\W_B$ bounded by lines $\ell_1$ and $\ell_2$ containing (most of) the
blue points $B$ so that the number of points $k_R$ from $R$ in the
interior $\interior(\W_B)$ of $\W_B$ and/or the number of points $k_B$
from $B$ in the interior of the region $\W_R = \R^2 \setminus \W_B$ is
minimized. We refer to these subsets $\E_R = R \cap \interior(\W_B)$
and $\E_B = B \cap \interior(\W_R)$ as the red and blue outliers,
respectively, and define $\E = \E_R \cup \E_B$ and $k=k_R+k_B$.

\begin{figure}[tb]
    \centering
    \includegraphics{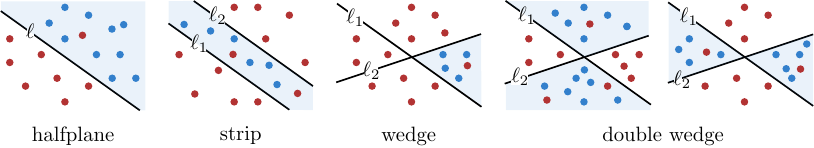}
    \caption{We consider separating $R$ and $B$ by at most two
      lines. This gives rise to four types of regions $\W_B$:
      halfplanes, strips, wedges, and two types of double wedges: hourglasses and bowties.
    }  
    \label{fig:allCases}
\end{figure}

The region $\W_B$ is either: $(i)$ a halfplane, $(ii)$ a
\emph{strip} bounded by two parallel lines $\ell_1$ and $\ell_2$,
$(iii)$ a \emph{wedge}, i.e., one of the four regions induced by a pair
of intersecting lines $\ell_1$ and $\ell_2$, or $(iv)$ a \emph{double wedge},
i.e., two opposing regions induced by a pair
of intersecting lines $\ell_1$ and $\ell_2$ (we further distinguish \emph{hourglass} double wedges, that contain a vertical line, and the remaining \emph{bowtie} double wedges). See
Figure~\ref{fig:allCases}. We can reduce the case
that $\W_B$ would consist of three regions to the single-wedge case by
recoloring the points. For each of these cases for the shape of $\W_B$ we consider three problems: allowing only red outliers ($k_B = 0$) and minimizing $k_R$, allowing only blue outliers ($k_R = 0$) and minimizing $k_B$, or allowing both outliers and minimizing $k$. We present efficient algorithms for each of these problems, as shown in Table~\ref{tab:overview_results}.

\subparagraph{Motivation and related work.} 
Classification is a key problem in computer
science. The input is a labeled set of points and the goal is
to obtain a procedure that, given an unlabeled point, assigns it a label
that ``fits it best'', considering the labeled points. Classification has many
direct applications, e.g. identifying SPAM in email messages,
or tagging fraudulent transactions~\cite{sculley07relax_onlin_svms_spam_filter,shen2007application},
but is also the key subroutine in other problems such as
clustering~\cite{DBLP:books/crc/aggarwal2014}. 

We restrict our attention to
binary classification where our input is a set $R$ of red points and a set $B$ of blue points.
We can compute whether $R$ and $B$ can be
perfectly separated by a line (and compute such a line if it exists)
in $O(n)$ time using linear programming. This extends to finding a
separating hyperplane in case of points in $\R^d$, for some constant
$d$~\cite{megiddo84linear_progr_linear_time_when}. 

\begin{figure}[b]
  \centering
  \includegraphics{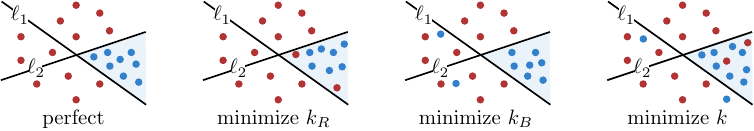}
  \caption{Perfectly separating $R$ and $B$ may require more than one
    line. When considering outliers, we may allow `(and minimize) only red outliers, only blue outliers, or both.
  }
  \label{fig:outlier_types}
\end{figure}

Clearly, it is not always possible to find a hyperplane that
perfectly separates the red and the blue points, see for example
Figure~\ref{fig:outlier_types}, in which the blue points are actually
all contained in a wedge. 
Hurtato
\etal~\cite{hurtado04separ,hurtado01separ} consider separating $R$ and
$B$ in $\R^2$ using at most two lines $\ell_1$ and $\ell_2$. In this
case, linear programming is unfortunately no longer
applicable. Instead, they present $O(n\log n)$ time
algorithms to compute a perfect separator (i.e., a strip, wedge, or
double wedge containing all blue points but no red points), if it
exists. These results were shown to be
optimal~\cite{arkin06some_lower_bound_geomet_separ_probl}, and can be
extended to the case where $B$ and $R$ contain other geometric objects
such as segments or circles, or to include constraints on the
slopes~\cite{hurtado04separ}. Similarly, Hurtado
\etal~\cite{hurtado05red} considered similar strip and wedge
separability problems for points in $\R^3$. Arkin
\etal~\cite{arkin12separ_point_sets_level_linear_class_trees} show how
to compute a 2-level binary space partition (a line $\ell$ and two
rays starting on $\ell$) separating $R$ and $B$ in $O(n^2)$ time, and
a minimum height $h$-level tree, with $h \leq \log n$, in
$n^{O(\log n)}$ time. Even today, computing perfect bichromatic
separators with particular geometric properties remains an active research topic~\cite{alegria23separ}.

Alternatively, one can consider separation with a (hyper-)plane but allow for outliers.
Chan~\cite{chan05low_dimen_linear_progr_violat} presented
algorithms for linear programming in $\R^2$ and $\R^3$ that allow for
up to $k$ violations --and thus solve hyperplane separation with up to
$k$ outliers-- that run in $O((n+k^2)\log n)$ and
$O(n\log n + k^{11/4}n^{1/4}\polylog n)$ time, respectively. In higher
(but constant) dimensions, only trivial solutions are known. For
arbitrary (non-constant) dimensions the problem is
NP-hard~\cite{amaldi95compl_approx_findin_maxim_feasib}. There is also a fair amount of work that
aims to find a halfplane that minimizes some other error measure, e.g. the distance to the farthest misclassified point, or the sum of
the distances to misclassified
points~\cite{aronov12minim,har-peled05separ_outlier}.

Separating points using more general non-hyperplane separators and
outliers while incorporating guarantees on the number of outliers
seems to be less well studied. Seara~\cite{seara2002geometric} showed
how to compute a strip containing all blue points, while minimizing
the number of red points in the strip in $O(n\log n)$ time. Similarly,
he presented an $O(n^2)$ time algorithm for computing a wedge with the
same properties. Armaselu and Daescu~\cite{armaselu19dynam} show how
to compute and maintain a smallest circle containing all red points
and the minimum number of blue points. In this paper, we take some further steps toward the fundamental, but challenging problem of computing a robust non-linear separator that provides performance guarantees. 

\newcommand{\our}[1]{\ensuremath{#1}}

\begin{table}[tb]
  \caption{An overview of our results. Expected running times are
    marked with a $\ddag$.}
  \centering
  \begin{tabular}{l llllll}
    \toprule
    region $\W_B$ & \multicolumn{2}{c}{minimize $k_R$} & \multicolumn{2}{c}{minimize $k_B$} & \multicolumn{2}{c}{minimize $k$} \\
    \midrule
    halfplane 
                 & $O(n\log n)$ &\S\ref{sec:Single_line_separation_one-sided}
                 & $O(n\log n)$ &\S\ref{sec:Single_line_separation_one-sided}
                 & $O((n+k^2)\log n)$&\cite{chan05low_dimen_linear_progr_violat}\\
    \midrule
    strip 
             & $O(n\log n)$&\cite{seara2002geometric}, \S\ref{sub:Strip_Red_outliers}
             & $O(n^2\log n)$   &\S\ref{sub:Strip_Blue_outliers}
             & $O(n^2\log n)$   &\S\ref{sub:strip_Both_outliers}\\
    \midrule
    wedge 
                 & $O(n^2)$    &\cite{seara2002geometric}
             &$O(n^{5/2} \log n) \ddag$& \S\ref{sub:Single_wedge_with_blue_outliers} 
             & & \\
             
                 & \our{O(n\log n)}
               &\S\ref{sub:Single_wedge_with_red_outliers}
             & $O(n k_B^2 \log^2 n \log k_B)$ &\S\ref{sub:single_wedge_both_outliers}
             & $O(nk^2 \log^3 n \log k)$ &\S\ref{sub:single_wedge_both_outliers} \\

    \midrule
    double bowtie
               & \our{O(n^2)} &\S\ref{sub:bowtie_red}
               & \our{O(n^2 \log n)} &\S\ref{sub:bowtie_blue}
               & $O(n^2 k \log^3 n \log k)$ & \S\ref{sub:Hourglass_wedge_separation_with_both_outliers}\\
    \bottomrule
  \end{tabular}
  \label{tab:overview_results}
\end{table}

\subparagraph{Results.} 
We present efficient algorithms for computing
a region $\W_B = \W_B(\ell_1,\ell_2)$ defined by at most two lines
$\ell_1$ and $\ell_2$ containing only the blue points, that are robust
to outliers. Our results depend on the type of region $\W_B$ we are
looking for, i.e., halfplane, strip, wedge, or double wedge, 
as well as on the type of outliers we allow: red outliers (counted by $k_R$), blue outliers (counted by $k_B$), or all outliers (counted by $k$).
Refer to Table~\ref{tab:overview_results} for an overview.

Our main contributions are efficient algorithms for when $\W_B$ is
really bounded by two lines. These versions can be solved by a simple
$O(n^4)$ time algorithm that explicitly considers all candidate
regions, refer to Section~\ref{sub:brute_force}. However, we show that
these versions can actually be solved significantly faster as well.

In particular in the versions where we minimize the number of red
outliers $k_R$ we achieve significant speedups. For example, we can
compute an optimal wedge $\W_B$ containing $B$ and minimizing $k_R$ in
optimal $\Theta(n\log n)$ time (which improves an earlier $O(n^2)$
time algorithm from Seara~\cite{seara2002geometric}). We use two types
of duality transformations that allow us to map each point
$p \in R \cup B$ into a \emph{forbidden region} $E_p$ in a
low-dimensional parameter space, such that: \textit{i)} every point
$s$ in this parameter space corresponds to a region $\W_B(s)$, and
\textit{ii)} this region $\W_B(s)$ misclassifies point $p$ if and only
if this point $s$ lies in $E_p$. This allows us to solve the problem
by computing a point that lies in the minimum number of forbidden
regions.

Surprisingly, the versions of the problem in which we minimize the
number of blue outliers $k_B$ are much more challenging. For none of
these versions we can match our running times for minimizing $k_R$,
while needing more advanced tools. For example, for the single wedge
version, we use dynamic lower envelopes to obtain a
batched query problem that we solve using spanning-trees with low
stabbing number~\cite{chazelle89quasi_optim_range_searc_space}. See
Section~\ref{sub:Single_wedge_with_blue_outliers}.

For the case where both red and blue outliers are allowed and we minimize $k$, we present output-sensitive algorithms whose running time depends on the optimal value of $k$. We essentially fix one of the lines $\ell_1$, and use linear
programming (LP) with
violations~\cite{chan05low_dimen_linear_progr_violat,
  glazenburg2024dynamicsvm} to compute an optimal line $\ell_2$ that
together with $\ell_1$ defines $\W_B$. We show that by using results
on $\leq k$-levels, a recent data structure for dynamic LP with
violations~\cite{glazenburg2024dynamicsvm}, and binary
searching, we can achieve algorithms with running times around
$O(n^2k\polylog n)$.


\subparagraph{Outline.}
We give some additional definitions and notation in Section~\ref{sec:Preliminaries}. In Section~\ref{sec:propertiesOptimalSeparator} we present a characterization of optimal solutions that lead to our simple $O(n^4)$ time algorithm for any type
of wedges, and in Section~\ref{sec:Single_line_separation_one-sided} we show how to extend Chan's algorithm~\cite{chan05low_dimen_linear_progr_violat} for linear programming with violations to handle one-sided outliers. In Sections \ref{sec:Separation_with_a_strip}, \ref{sec:single_wedge}, and \ref{sec:double_wedge} we discuss the case when $\W_B$ is, respectively, a strip, wedge, or double wedge. In each of these sections we separately go over minimizing the number of red outliers $k_R$, the number of blue outliers $k_B$, and the total number of outliers $k$. We wrap up with some concluding remarks and future work in
Section~\ref{sec:Concluding_Remarks}.

\section{Preliminaries}
\label{sec:Preliminaries}

In this section we discuss some notation and concepts used throughout
the paper.
For ease of exposition we assume $B \cup R$ contains at least three
points and is in general position, i.e., that all coordinate values are
unique, and that no three points are colinear.

\subparagraph{Notation.} Let $\ell^-$ and $\ell^+$ be the two
halfplanes bounded by line $\ell$, with $\ell^-$ below $\ell$ (or left
of $\ell$ if $\ell$ is vertical). Any pair of lines $\ell_1$ and
$\ell_2$, with the slope of $\ell_1$ smaller than that of $\ell_2$,
subdivides the plane into at most four interior-disjoint regions
$\North(\ell_1,\ell_2)= \ell_1^+ \cap \ell_2^+$,
$\East(\ell_1,\ell_2) = \ell_1^+ \cap \ell_2^-$,
$\South(\ell_1,\ell_2) = \ell_1^- \cap \ell_2^-$ and
$\West(\ell_1,\ell_2) = \ell_1^- \cap \ell_2^+$. When $\ell_1$ and
$\ell_2$ are clear from the context we may simply write \North to mean
$\North(\ell_1,\ell_2)$ etc. We assign each of these regions to either
$B$ or $R$, so that $\W_B=\W_B(\ell_1,\ell_2)$ and
$\W_R=\W_R(\ell_1,\ell_2)$ are the union of some elements of
$\{\North, \East, \South, \West\}$. In case $\ell_1$ and $\ell_2$ are
parallel, we assume that $\ell_1$ lies below $\ell_2$, and thus
$\W_B=\East$.

\subparagraph{Duality.}  We make frequent use of the
standard point-line duality~\cite{deberg08computational_geometry}, where we map objects
in \emph{primal} space to objects in a \emph{dual} space. In
particular, a primal point $p=(a,b)$ is mapped to the dual line
$p^*: y = ax - b$ and a primal line $\ell: y = ax + b$ is mapped to the
dual point $\ell^* = (a,-b)$. If in the primal a point
$p$ lies above a line $\ell$, then in the dual the line $p^*$ lies
below the point $\ell^*$.

For a set of points $P$ with duals $P^* = \{p^* \mid p \in P\}$, we are
often interested in the \emph{arrangement} $\A(P^*)$, i.e., the
vertices, edges, and faces formed by the lines in $P^*$. Two unbounded
faces of $\A(P^*)$ are \emph{antipodal} if their unbounded edges have
the same two supporting lines. Since
every line contributes to four unbounded faces, there are $O(n)$ pairs of antipodal faces. We denote the upper envelope of $P^*$, i.e., the polygonal chain following the highest line in $\A(P^*)$, by $\U(P^*)$, and the lower envelope by $\L(P*)$.

\section{Properties of an optimal separator.} 
\label{sec:propertiesOptimalSeparator}
Next, we prove some structural properties about the lines bounding the region $\W_B$ containing (most of the) the blue points in $B$. First for strips:

\begin{lemma}
  \label{lem:opt_strip}
    For the strip classification problem there exists an optimum where one line goes through two points and the other through at least one point.
\end{lemma}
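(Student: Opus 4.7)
The plan is to use a shifting/perturbation argument in the three-dimensional parameter space of strips, parameterized by the common slope $\theta$ of $\ell_1$ and $\ell_2$ together with their two offsets $b_1<b_2$. In all three variants (minimizing $k_R$, $k_B$, or their sum $k$) the cost function is piecewise constant in $(\theta,b_1,b_2)$, with discontinuities occurring exactly when one of $\ell_1,\ell_2$ sweeps across a point of $R\cup B$. I would start from any optimal strip and impose incidence conditions one at a time, each time without changing the cost, until three incidences are present. By the general position assumption no three points are collinear, so the three incidences must split as $2+1$ between the two lines, which is exactly the claimed structure.

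First, I would fix $\theta$ and $b_2$ and view the cost as a function of $b_1$. Its breakpoints are precisely the $b_1$-values where $\ell_1$ passes through a point. Among the $b_1$-values achieving the optimum I pick one at such a breakpoint, yielding a point $p_1\in\ell_1$. An analogous slide in $b_2$ places a point $p_2\in\ell_2$.

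For the third incidence I would rotate. Fixing $p_1\in\ell_1$ and $p_2\in\ell_2$ and letting $\theta$ vary gives a one-parameter family of strips: since $p_1\neq p_2$ (two distinct parallel lines cannot share a point), each slope $\theta$ determines a unique parallel pair of lines through $p_1$ and $p_2$, respectively. Along this family the cost is again piecewise constant, with breakpoints at the $\theta$-values where some third point of $R\cup B$ meets $\ell_1$ or $\ell_2$. Picking an optimal $\theta$ at such a breakpoint adds a third incidence; by general position the witness $q$ is distinct from $p_1,p_2$ and not collinear with both, so exactly one of $\ell_1,\ell_2$ ends up passing through two points while the other still passes through at least one, as desired.

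The main obstacle is ensuring that each slide actually reaches a breakpoint rather than admitting an unbounded optimal interval. If the $b_1$- or $b_2$-slide is unbounded in the optimum, the strip degenerates into a halfplane, a case already covered by Section~\ref{sec:Single_line_separation_one-sided}; alternatively this can be absorbed by placing a symbolic point at infinity. For the rotational slide the cost can only change $O(n)$ times as $\theta$ sweeps its full $\pi$-range, so a breakpoint is always reachable. With these boundary cases handled, the three slides together establish the lemma.
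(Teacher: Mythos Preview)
Your argument is correct and follows essentially the same route as the paper's proof: first translate each bounding line until it is incident to a point (the paper phrases this as ``shrinking'' the strip), then rotate the two parallel lines about their respective incident points until a third incidence appears. Your write-up is more explicit about the piecewise-constant cost and the degenerate boundary cases, but the underlying perturbation scheme is identical.
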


\begin{proof}
    Clearly, we can shrink an optimal strip $\W_B(\ell_1,\ell_2)$ until both $\ell_1$ and $\ell_2$ contain a (blue) point, say $b_1$ and $b_2$, respectively. Now rotate $\ell_1$ around $b_1$ and $\ell_2$ around $b_2$ in counter-clockwise direction until either $\ell_1$ or $\ell_2$ contains a second point.
\end{proof}

Something similar holds for wedges:

\begin{lemma}
\label{lem:redBlueOptimum}
For any wedge classification problem there exists an optimum where both lines go through a blue and a red point.
\end{lemma}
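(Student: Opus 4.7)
The plan is to analyze the problem in the two-dimensional parameter space of $\ell_1$ after fixing $\ell_2$. On this parameter space the number of outliers is a piecewise constant function, with breakpoints lying on an arrangement of lines---one line per data point, each colored by that point's color. Let $\mathcal{O}$ be the connected component of $\{\ell_1 : C(\ell_1,\ell_2) = k^*\}$ that contains the current optimum. Then $\mathcal{O}$ is a closed region whose boundary decomposes into arcs of arrangement lines, and each arc inherits the color of its underlying data point. I would argue that $\partial\mathcal{O}$ contains at least one red boundary arc and at least one blue boundary arc, so walking around $\partial\mathcal{O}$ we must find a vertex where a red and a blue arc meet. At such a vertex $\ell_1$ passes through both a red and a blue data point, as required. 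Applying the same argument to $\ell_2$ with $\ell_1$ then fixed completes the proof.

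The key step is therefore to show that $\partial\mathcal{O}$ carries arcs of both colors. For the red arc I would push $\ell_1$ ``outward,'' i.e.\ away from the interior of $\W_B$ so that $\W_B$ grows, until the first data point $q$ is crossed. If $q$ were blue, then $q$ would have been a blue outlier in $\W_R$ and is now correctly classified inside $\W_B$, which would strictly decrease the cost and contradict optimality; hence $q$ must be red and produces a red arc on the outward side of $\partial\mathcal{O}$. Symmetrically, pushing $\ell_1$ ``inward'' so that $\W_B$ shrinks forces the first crossed data point to be blue (otherwise a red outlier would exit $\W_B$ and lower the cost), giving a blue arc on the inward side of $\partial\mathcal{O}$. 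Combining this with connectedness of $\mathcal{O}$ yields the bichromatic vertex.

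The main obstacle will be handling degenerate and unbounded cases. If $\mathcal{O}$ is unbounded it may be possible to push $\ell_1$ outward or inward arbitrarily far without ever crossing a data point, in which case the argument above breaks down; this happens only when $\W_B$ can grow or shrink unboundedly without changing the partition of $R\cup B$, a very restricted geometric configuration (e.g.\ no points of a certain color in a relevant half-plane). Such situations either make the lemma trivial or can be reduced to a lower-dimensional subproblem by first restricting the slope or the offset of $\ell_1$. The general-position assumption on $R \cup B$ should streamline most of these subtleties, and once they are dispatched the ``walk around $\partial\mathcal{O}$'' argument transfers verbatim to $\ell_2$ to finish the proof.
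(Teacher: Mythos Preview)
Your argument is essentially correct when the objective is the total number of outliers $k$, but the lemma is stated for \emph{any} wedge classification problem, and the paper applies it in particular to the one-sided objectives $k_R$ and $k_B$. For those objectives your ``push outward / push inward'' step breaks down. Concretely, suppose we are minimizing $k_R$ only. When you push $\ell_1$ outward and the first point crossed is blue, the cost $k_R$ does not change at all, so there is no contradiction with optimality; the first crossed point need not be red. In fact, in the parameter plane of $\ell_1$ the function $k_R(\ell_1,\ell_2)$ is constant across every blue dual line, so blue lines contribute nothing to $\partial\mathcal{O}$: the boundary of the optimal level set is made up entirely of red arcs, and there is no bichromatic vertex to find. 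The symmetric failure occurs for $k_B$. Hence your level-set argument, as written, proves the lemma only for the objective $k$.

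The paper avoids this issue by not arguing through the cost function at all. It shows directly that one can transform an optimal wedge into one whose defining lines pass through a red and a blue point while maintaining the containment $\E(\ell_1',\ell_2')\subseteq\E(\ell_1,\ell_2)$ of error \emph{sets}: for a single wedge, translate $\ell_1$ until it touches the convex hull of the correctly classified blue points, then rotate around that hull until a red point is hit; for a double wedge, work in the dual and slide the segment endpoint until it reaches a bichromatic face. Because only the inclusion $\E'\subseteq\E$ is used, the argument is oblivious to whether one is minimizing $k_R$, $k_B$, or $k$. If you want to salvage your approach for the one-sided objectives, you would need an auxiliary criterion (e.g.\ among all optimal $\ell_1$, pick one extremal with respect to the blue hull) to force a blue incidence; but that is precisely the geometric move the paper makes, and at that point the level-set framing no longer buys you anything.
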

\begin{proof}
We first show that any single wedge can be adjusted such that both its lines go through a blue and a red point, without misclassifying any more points. We then show the same for any double wedge. Since this also holds for any optimal wedge, we obtain the lemma.

\begin{claim}
\label{lem:lines_on_points_single}
Let
  $\W_B(\ell_1,\ell_2)$ be a single wedge so that there is at least
  one correctly classified point of each color. There exists a single wedge $\W_B(\ell'_1,\ell'_2)$
  such that: (1) both $\ell_1'$ and $\ell_2'$ go through a red point and a blue point,
  and (2) $\E(\ell_1',\ell_2') \subseteq \E(\ell_1,\ell_2)$.
\end{claim}
\begin{claimproof}
We show how to find $\ell_1'$ with a fixed $\ell_2$. Line $\ell_2'$ can be found in the same way afterwards while fixing $\ell_1'$. 

W.l.o.g. assume $\W_B$ is the $\West$ wedge. Let $B' \subseteq B$ be the correctly classified blue points in that wedge. Start with $\ell_1' = \ell_1$ and shift it downwards it until we hit the convex hull $\CH(B')$. Note that this does not violate (2): no extra red points are misclassified since we only make the $\West$ wedge smaller, and no extra blue points are misclassified because we stop at the first correctly classified one we hit. Rotate $\ell_1'$ clockwise around $\CH(B')$ until we hit a red point, at which point we satisfy (1). If $\ell_1'$ becomes vertical, the naming of the wedges shifts clockwise (e.g. the $\West$ wedge becomes the $\North$ wedge), so we must change $\W_B(\ell_1',\ell_2')$ and $\W_R(\ell_1',\ell_2')$ appropriately. If $\ell_1'$ becomes parallel to $\ell_2$, the $\East$ wedge (temporarily) becomes a strip and the $\West$ wedge disappears. Immediately afterwards the strip becomes the $\West$ wedge, and a new empty $\East$ wedge appears. If $B$ is assigned $\West$ or $\East$ at this time we must change $\W_B(\ell_1',\ell_2')$ and $\W_R(\ell_1',\ell_2')$ appropriately.

This procedure does not violate (2) because all of $B'$ lies on the same side of $\ell_1'$ at all times, and we never cross red points. It terminates, i.e., we hit a red point before having rotated around the entire convex hull, because we assumed there to be at least one correctly classified red point which must lie outside the $\West$ wedge and therefor outside of $\CH(B')$.
\end{claimproof}

Now we show the same for double wedges:

\begin{claim}
\label{lem:lines_on_points_double}
Let
  $\W_B(\ell_1,\ell_2)$ be a double wedge so that there is at least
  one correctly classified point of each color. There exists a double wedge $\W_B(\ell'_1,\ell'_2)$
  such that: (1) both $\ell_1'$ and $\ell_2'$ go through a red point and a blue point,
  and (2) $\E(\ell_1',\ell_2') \subseteq \E(\ell_1,\ell_2)$.
\end{claim}
\begin{claimproof}
We show how to find $\ell_1'$ with a fixed $\ell_2$. Line $\ell_2'$ can be found in the same way afterwards while fixing $\ell_1'$. 

W.l.o.g. assume $\W_B$ is the bowtie consisting of the $\West$ and $\East$ wedges. Consider the dual arrangement $\A(B^*,R^*)$, where we want segment $\ol{{\ell_1^*} \ell_2^*}$ to intersect blue lines but not red lines. Let $m$ be the supporting line of segment $\ol{{\ell_1^*} \ell_2^*}$. Start with ${\ell_1^*}' = \ell_1^*$. Note that if ${\ell_1^*}'$ ever lies in a face with red and blue segments on its boundary we are done: set ${\ell_1^*}'$ as one of the red-blue intersections, which satisfies (1) and does not change (2). Otherwise we distinguish two cases:
\begin{itemize}
    \item ${\ell_1^*}'$ lies in an all red face. We can shrink $\ol{{\ell_1^*}' \ell_2^*}$ by moving ${\ell_1^*}'$ towards $\ell_2^*$ along $m$ until we enter a bicolored face, in which case we are done. This will not violate (2), since shrinking the segment can only cause it to intersect fewer red lines. We must enter a bicolored face before the segment collapses since we assumed there to be at least one correctly classified blue point. See Figure \ref{fig:redBlueLine}a.
    
    \item ${\ell_1^*}'$ lies in an all blue face. We extend $\ol{{\ell_1^*}' \ell_2^*}$ by moving ${\ell_1^*}'$ along $m$ until either (i) ${\ell_1^*}'$ enters a bicolored face, in which case we are done, or until (ii) ${\ell_1^*}'$ ends up in an outer face which is unbounded in the direction of $m$. This does not violate (2), since extending the segment will only make it intersect more blue lines. See Figure \ref{fig:redBlueLine}b.
    
    In case (ii), let $F$ be the outer face ${\ell_1^*}'$ ends up in, and let $p^*$ be some point on $m$ in $F$. Let $F'$ be the antipodal face of $F$, and let $q^*$ be some point on $m$ in $F'$. See Figure \ref{fig:redBlueLine}c. Observe that segment $\ol{q^* \ell_2^*}$ intersects exactly those lines that segment $\ol{p^* \ell_2^*}$ does not intersect, and the other way around. In the primal, points in the hourglass wedge of $(p,\ell_2)$ are in the bowtie wedge of $(q,\ell_2)$. Therefore segment $\ol{p^* \ell_2^*}$ yields the exact same classification $\ol{q^* \ell_2^*}$, after assigning $B$ to the hourglass wedge instead of the bowtie wedge.
    
    This means we can set ${\ell_1^*}' = q$, change the color assignment appropriately, and shrink segment $\ol{{\ell_1^*}' \ell_2^*}$ until ${\ell_1^*}'$ lies in a bicolored face. This does not violate (2), since shrinking the segment only makes it intersect fewer blue lines. We must enter a bicolored face before the segment collapses since we assumed there to be at least one corrrectly classified red point.
    \claimqedhere
\end{itemize}
\end{claimproof}

The above two claims tell us that, given some optimal (double) wedge for a wedge separation problem, we can adjust the wedge until both lines go through a red and a blue point, proving the lemma.
\end{proof}

\begin{figure}[tb]
    \centering
    \includegraphics{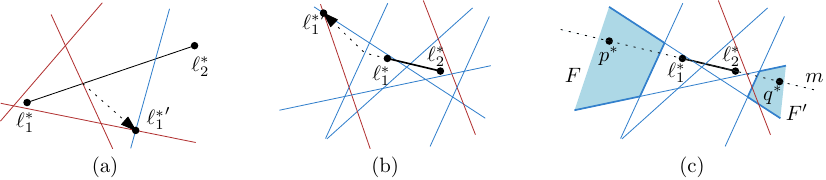}
    \caption{(a)/(b): shrinking/extending segment $\ol{{\ell_1^*} \ell_2^*}$ until it reaches a bicolored face. (c): points $p^*$ and $q^*$ in antipodal outer faces $F$ and $F'$. Segment $\ol{p^* \ell_2^*}$ intersects exactly those lines that $\ol{q^* \ell_2^*}$ does not intersect. }
    \label{fig:redBlueLine}
\end{figure}

\subsection{Simple general algorithm}
\label{sub:brute_force}
Lemma~\ref{lem:redBlueOptimum} tells us we only have to consider lines through red and blue points. Hence, there is a simple brute-force algorithm that considers all pairs of such lines, which works for both wedges and double wedges and any type of outliers.

\begin{theorem}
    Given two sets of $n$ points $B, R \subset \R^2$, we can construct a (double) wedge $\W_B$ minimizing either $k_r$, $k_b$, or $k$ in $O(n^4)$ time.
\end{theorem}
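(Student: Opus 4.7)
The plan is to use Lemma~\ref{lem:redBlueOptimum} to restrict attention to pairs of lines $(\ell_1,\ell_2)$ in which each line passes through one red point and one blue point. There are at most $|R|\cdot|B| \leq n^2$ such candidate lines, and hence $O(n^4)$ candidate pairs. For any fixed pair, the relevant $\W_B$ candidates are the four single wedges $\North,\East,\South,\West$ and the two double wedges $\North\cup\South$ and $\East\cup\West$, and each candidate's contribution to $k_R$, $k_B$, and $k$ is determined by the eight counts $n_{c,q}$ giving the number of points of color $c\in\{R,B\}$ in quadrant $q\in\{\North,\East,\South,\West\}$. Given these eight counts all six candidates can be evaluated in $O(1)$ time, so the task reduces to maintaining them with $O(1)$ amortized work per pair.

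I would use a nested loop. The outer loop fixes $\ell_1$, giving $O(n^2)$ iterations. With $\ell_1$ fixed, every input point has a fixed side with respect to $\ell_1$, so it suffices to maintain, for each of the four (color, side-of-$\ell_1$) groups, how many of its points lie above $\ell_2$. Dualizing, a point $p$ lies above $\ell_2$ iff the dual line $p^*$ lies above the dual point $\ell_2^*$, so each of the four counts depends only on the face of $\A((R\cup B)^*)$ containing $\ell_2^*$ and changes by exactly one whenever $\ell_2^*$ crosses an edge of this arrangement.

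The inner loop is therefore a traversal of $\A((R\cup B)^*)$ (for instance a topological sweep), which has $O(n^2)$ complexity and can be walked in $O(n^2)$ total time. I initialize the four counts from scratch in $O(n)$ time at the start of each outer iteration, and each subsequent edge crossing updates a single count in $O(1)$. The candidate lines for $\ell_2$ correspond exactly to the bichromatic vertices of $\A((R\cup B)^*)$; at every such vertex encountered during the traversal I read off the four counts, derive the eight numbers $n_{c,q}$, evaluate the six $\W_B$ candidates, and update the best solution so far for each of the three objectives. The total running time is $O(n^2)$ outer iterations times $O(n^2)$ work per iteration, namely $O(n^4)$.

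The main technical obstacle will be handling degenerate behavior at bichromatic vertices: the red and blue points that define $\ell_2$ lie on $\partial\W_B$ rather than in its interior and hence must not be charged as outliers. Under the general position assumption, only these two defining points lie on $\ell_2$, so the issue can be resolved by a constant-size correction each time the traversal reaches such a vertex; an analogous correction for the two points on $\ell_1$ can be carried out once, at the start of the outer iteration.
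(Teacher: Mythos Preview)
Your proposal is correct and follows essentially the same approach as the paper: restrict to bichromatic candidate lines via Lemma~\ref{lem:redBlueOptimum}, then walk the dual arrangement $\A((R\cup B)^*)$ while maintaining the quadrant counts with $O(1)$ work per step. The only cosmetic difference is that the paper also moves $\ell_1^*$ incrementally through the arrangement (a doubly nested DFS) rather than re-initializing from scratch in $O(n)$ time at each of the $O(n^2)$ outer iterations, but this does not affect the $O(n^4)$ bound; your treatment of boundary points is in fact slightly more careful than the paper's.
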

\begin{proof}
By Lemma~\ref{lem:redBlueOptimum} we only have to consider lines through blue and red points. There are $O(n^2)$ such lines, so $O(n^4)$ pairs of such lines. We could trivially calculate the number of misclassifications for two fixed lines in $O(n)$ time by iterating through all points, which would result in $O(n^5)$ total time, but we can improve on this.

Construct the dual arrangement $\A(B^* \cup R^*)$ of $B \cup R$ in $O(n^2)$ time. A red-blue intersection in the dual $\A(B^* \cup R^*)$ corresponds to a candidate line through a red and a blue point in the primal. Choose two arbitrary red-blue vertices as $\ell_1^*$ and $\ell_2^*$, and calculate the number of red and blue points in each of the four wedge regions in $O(n)$ time. Move $\ell_1^*$ through the arrangement in a depth-first search order, updating the number of points in each wedge at each step. There is only a single point that lies on the other side of $\ell_1$ after this movement, so this update can be done in constant time. After every step of $\ell_1^*$, also move $\ell_2^*$ through the whole arrangement in depth-first search order, updating the number of points in each wedge, again in constant time. Finally, output the pair of lines that misclassify the fewest points. There are $O(n^2)$ choices for $\ell_1^*$, and for each of those there are $O(n^2)$ choices for $\ell_2^*$. Since every update takes constant time, this takes $O(n^4)$ time in total.
\end{proof}

\section{Separation with a line: one-sided outliers}
\label{sec:Single_line_separation_one-sided}
In this section, we consider the case when $\W_B$ is an upper halfplane $\ell^+$ bounded by line $\ell$, and we minimize the number of red outliers $k_R$. Minimizing $k_B$ can be done symmetrically.

We solve this problem in the dual, where our goal is to find a
point $\ell^*$ that lies above all blue lines and above as few red lines as possible. See
Figure~\ref{fig:single_line_red_outliers}. Since $\ell$ lies above all blue lines, it lies above $\U(B^*$). In fact, we can assume $\ell$ lies on $\U(B^*)$, since we can always shift it downwards until it
lies on $\U(B^*)$ without increasing the number of red lines below
$\ell^*$.

\begin{figure}[tb]
  \centering
  \includegraphics{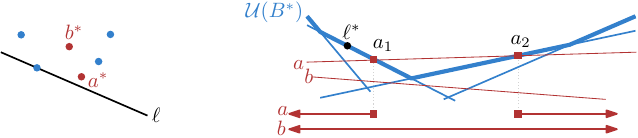}
  \caption{(left) Line $\ell$ misclassifies red points $a$ and $b$. (right) In the dual
    space that means $\ell$ lies in the forbidden regions of both $a$ and $b$.}
  \label{fig:single_line_red_outliers}
\end{figure}

As $\U(B^*)$ is $x$-monotone, there is only one degree of freedom for choosing point $\ell^*$: its $x$-coordinate. We parameterize $\U(B^*)$ over $\R$, our \emph{parameter space}, such that each value $p \in \R$ corresponds to the point $\ell^*$ on $\U(B^*)$ at $x = p$. We wish to find a value in this parameter space whose corresponding point minimizes the number of red misclassifications, i.e., the number of red lines below $\ell^*$. Let the \emph{forbidden regions} of a red line $r \in R^*$ be those intervals in the parameter space in which corresponding points misclassify $r$.

Since $\U(B^*)$ is a convex polygonal chain, line $r$ intersects $\U(B^*)$ at most two times. Therefore, we distinguish between two types of red lines:

\begin{itemize}
\item Line $a$ intersects $\U(B^*)$ in $a_1$ and $a_2$, with $a_1 \leq a_2$. Points $\ell^*$ left of $a_1$ or right of $a_2$ misclassify $a$, so $a$ produces two forbidden intervals: $(-\infty,a_1)$ and $(a_2,\infty)$.
\item Line $b$ does not intersect $\U(B^*)$. Then all points $\ell^*$ misclassify $a$, so $a$ produces one trivial forbidden interval: $(-\infty,\infty)$.
\end{itemize}

We can compute these $O(n)$ forbidden intervals in $O(n \log n)$ time (using that $\U(B^*)$ is convex). We can then compute a point $\ell^* \in \R$ with minimum ply in these forbidden regions by sorting and scanning the intervals, which also takes $O(n \log n)$ time. The primal line $\ell$ (corresponding to dual point $\ell^*$) is then a line above all blue lines and above as few red lines as possible. We conclude:

\begin{theorem}
  \label{thm:single_outliers_line}
    Given two sets of $n$ points $B, R \subset \R^2$, we can find a line with all points of $B$ on one side and as many points of $R$ as possible on the other side in $O(n \log n)$ time.
\end{theorem}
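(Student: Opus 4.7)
The plan is to dualize and reduce the problem to a minimum-depth problem on the upper envelope $\U(B^*)$ of the blue dual lines. In the dual, the constraint $B \subseteq \ell^+$ becomes the requirement that $\ell^*$ lies (weakly) above every $b^*$, i.e.\ above $\U(B^*)$, and a red point $r$ lies in $\ell^+$ if and only if $\ell^*$ lies above $r^*$. Since translating $\ell^*$ straight up while keeping it above $\U(B^*)$ can only increase the set of red dual lines below $\ell^*$, an optimal $\ell^*$ may be assumed to lie exactly on $\U(B^*)$. So the task reduces to finding a point on $\U(B^*)$ above which as few red dual lines as possible sit.

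For the second step, I would show that for each red point $r$ the set $I_r \subseteq \U(B^*)$ of positions along the envelope at which $r$ would be misclassified is (at most) one interval in the $x$-monotone parameterization of the envelope. This uses concavity: since $\U(B^*)$ is an $x$-monotone concave chain and $r^*$ is a straight line, $r^*$ meets $\U(B^*)$ in at most two points, between which the envelope lies above $r^*$ and outside which it lies below. Thus $I_r$ is either empty, a single sub-interval, or all of $\U(B^*)$, matching the picture in Figure~\ref{fig:single_line_red_outliers}. The endpoints of $I_r$, when they exist, can be computed in $O(\log n)$ time by binary search over the breakpoints of $\U(B^*)$.

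Finally, I would solve the resulting minimum-depth problem with a standard left-to-right sweep over the $O(n)$ interval endpoints, maintaining a counter that is incremented/decremented at left/right endpoints and tracking its minimum. The position attaining the minimum depth, lifted back through the duality transform, yields the desired line $\ell$. The total running time is $O(n\log n)$: building $\U(B^*)$ takes $O(n\log n)$, the $n$ binary searches take $O(n\log n)$, and sorting and sweeping the endpoints takes $O(n\log n)$.

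The main obstacle, and really the only non-routine step, is the interval characterization of $I_r$; the reduction to minimum-depth on intervals hinges on it. Once that characterization is established, together with careful handling of the degenerate cases where $r^*$ lies entirely above or entirely below $\U(B^*)$ (giving $I_r = \emptyset$ or $I_r$ equal to all of $\U(B^*)$, respectively), the remaining work is essentially bookkeeping.
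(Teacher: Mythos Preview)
Your approach is essentially identical to the paper's: dualize, place $\ell^*$ on $\U(B^*)$, associate to each red dual line a set of ``bad'' positions along the envelope, and sweep for the minimum-ply point.

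One correction is needed, though it does not break the argument. The upper envelope of a set of lines is the pointwise \emph{maximum} of linear functions and is therefore \emph{convex}, not concave. Consequently, between the (at most) two crossings of $r^*$ with $\U(B^*)$ the envelope lies \emph{below} $r^*$, not above it, so the misclassification set $I_r=\{x:\U(B^*)(x)>r^*(x)\}$ is the \emph{complement} of a single interval (two rays, one ray, or all of $\R$) rather than a single interval. Relatedly, the degenerate case ``$r^*$ lies entirely above $\U(B^*)$'' cannot occur, since the envelope is unbounded above on both sides. None of this affects the algorithm or its $O(n\log n)$ running time---each red line still contributes at most two breakpoints to the sweep---but your interval characterization should be adjusted accordingly.
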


\section{Separation with a strip}
\label{sec:Separation_with_a_strip}

In this section we consider the case where lines $\ell_1$ and $\ell_2$
are parallel, with $\ell_2$ above $\ell_1$, and thus $\W_B(\ell_1,\ell_2)$ forms a strip. We want $B$ to be inside the strip, and $R$ outside. We work in the dual, where we want to find two points $\ell_1^*$
and $\ell_2^*$ with the same $x$-coordinate such that vertical segment
$\ol{\ell_1^* \ell_2^*}$ intersects the lines in $B^*$ but not the
lines in $R^*$.

\subsection{Strip separation with red outliers}
\label{sub:Strip_Red_outliers}

We first consider the case where all blue points must be correctly
classified, and we minimize the number of red outliers $k_R$. We present an $O(n\log n)$ time algorithm to this end. Note that this runtime matches the existing algorithm from
Seara~\cite{seara2002geometric}.
We wish to find a segment $\ol{\ell_1^* \ell_2^*}$ that intersects all lines in $B^*$,
so $\ell_1^*$ must be above the upper envelope $\U(B^*)$ and $\ell_2^*$ must be below the lower envelope
$\L(B^*)$. Again we can assume $\ell_1^*$ to lie on $\U(B^*)$ and
$\ell_2^*$ on $\L(B^*)$, since shortening $\ol{\ell_1^* \ell_2^*}$ can only
decrease the number of red lines intersected. 

As before there is only one degree of
freedom for choosing our segment: its $x$-coordinate.
We parameterize $\U(B^*)$ and $\L(B^*)$ over our parameter space $\R$, such that each point $p \in \R$ corresponds to the vertical segment $\ol{\ell_1^* \ell_2^*}$ on the line $x = p$.
We wish to find a point in this parameter space whose corresponding segment minimizes the number of red misclassifications, i.e., the number of red intersections. 
We distinguish between four types of red lines, as in Figure \ref{fig:stripLineTypes}:

\begin{itemize}
    \item Line $a$ intersects $\U(B^*)$ in points $a_1$ and $a_2$, with $a_1 \leq a_2$. Segments with $\ell^*_1$ left of $a_1$ or right of $a_2$ misclassify $a$, so $a$ produces two forbidden intervals: $(-\infty,a_1)$ and $(a_2,\infty)$. 
    
    \item Line $b$ intersects $\L(B^*)$ in points $b_1$ and $b_2$, with $b_1 \leq b_2$. Similar to line $a$ this produces forbidden intervals $(-\infty,b_1)$ and $(b_2,\infty)$.

    \item Line $c$ intersects $\L(B^*)$ in $c_1$ and $\U(B^*)$ in $c_2$. Only segments between $c_1$ and $c_2$ misclassify $c$. This gives one forbidden interval: $(\min\{c_1,c2\},\max\{c_1,c_2\})$. 

    \item Line $d$ intersects neither $\U(B^*)$ nor $\L(B^*)$. All segments misclassify $d$. This gives one trivial forbidden region, namely the entire space $\R$.
\end{itemize}

The above list is exhaustive. Clearly a line can not intersect $\U(B^*)$ or $\L(B^*)$ more than twice. Let $b_1, b_2$ be the two blue lines supporting the unbounded edges of $\U(B^*)$, and note that $b_1$ and $b_2$ are also the supporting lines of the unbounded edges of $\L(B^*)$. Therefore, if a line intersects $\U(B^*)$ twice it can not intersect $\L(B^*)$ and vice versa. Additionally, any line intersecting $\U(B^*)$ once must have a slope between those of $b_1$ and $b_2$, hence it must also intersect $\L(B^*)$ once, and vice versa.

\begin{figure}[tb]
    \centering
    \includegraphics{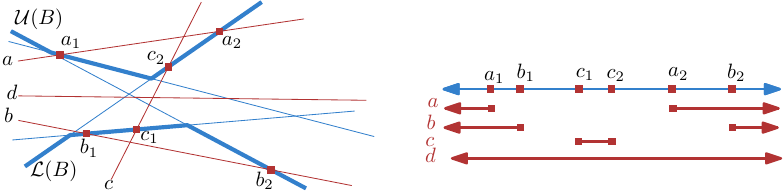}
    \caption{Four types of red lines for strip separation, with restrictions on their parameter space.}
    \label{fig:stripLineTypes}
\end{figure}

Computing $\U(B^*)$ and $\L(B^*)$ takes $O(n \log n)$ time. Given a red line $r \in R^*$ we can compute its intersection points with $\U(B^*)$ and $\L(B^*)$ in $O(\log n)$. Computing the forbidden regions thus takes $O(n\log n)$ time in total. Recall that our goal is to find a point with minimum ply in these forbidden regions. We can compute such a point in $O(n \log n)$ time by sorting and scanning the intervals. We conclude:

\begin{theorem}
    Given two sets of $n$ points $B, R \subset \R^2$, we can construct a strip $\W_B$ minimizing the number of red outliers $k_R$ in $O(n \log n)$ time.
\end{theorem}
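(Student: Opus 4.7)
The plan is to execute the algorithm outlined in the discussion above and account for the running time. First I would dualize, so the task becomes: choose a vertical segment whose upper endpoint lies on $\U(B^*)$ and whose lower endpoint lies on $\L(B^*)$, minimizing the number of lines in $R^*$ that it crosses. Both envelopes can be computed in $O(n \log n)$ time by any standard convex-envelope algorithm. Since each envelope is an $x$-monotone convex polygonal chain, parameterizing a feasible segment by its $x$-coordinate reduces the problem to optimization over a one-dimensional parameter space, as already set up in the preceding text.

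Next, for each red line $r \in R^*$ I would compute its forbidden interval(s) via the four-type classification. Because $\U(B^*)$ is convex, the set of $x$ for which $r$ lies above $\U(B^*)$ is determined by at most two intersection points, each locatable by a single binary search on the slopes of the envelope edges in $O(\log n)$ time, and likewise for $\L(B^*)$. Classifying $r$ as type $a$, $b$, $c$, or $d$ then follows from the pair of returned intersection counts. The exhaustiveness argument is clean: the two extreme blue lines $b_1,b_2$ support the unbounded edges of both envelopes, so a line whose slope lies strictly between those of $b_1$ and $b_2$ hits each envelope exactly once (type $c$), a line whose slope lies outside that range either crosses $\U(B^*)$ twice (type $a$), crosses $\L(B^*)$ twice (type $b$), or misses both (type $d$); and no other combination is geometrically possible. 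All forbidden intervals are thus produced in $O(n \log n)$ total time, and there are at most $2n$ of them.

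Finally, I would reduce to a standard ply-minimization problem on $\R$: find $x^* \in \R$ minimizing $|\{I : x^* \in I\}|$ over the computed intervals, treating type-$d$ lines as an additive constant since their forbidden region is all of $\R$. Sorting the $O(n)$ endpoints and performing a left-to-right sweep while maintaining a running cover count yields the minimum-ply $x^*$ in $O(n \log n)$ time; translating $x^*$ back to the primal strip is immediate. The main obstacle, such as it is, is the exhaustiveness of the four-type case analysis and the handling of degeneracies (a red line tangent to an envelope, or parallel to an extreme blue line). Under the general-position assumption these cases do not arise, so the argument reduces to careful bookkeeping, and the total running time is $O(n \log n)$.
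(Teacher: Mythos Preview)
Your proposal is correct and follows essentially the same approach as the paper: dualize, place the segment endpoints on $\U(B^*)$ and $\L(B^*)$, parameterize by $x$-coordinate, compute the forbidden intervals for each red line via the four-type classification (using binary search on the convex envelopes), and finish with a sort-and-sweep to find the minimum-ply point. The paper's argument and yours match step for step, including the exhaustiveness discussion based on the two extreme blue lines supporting the unbounded edges of both envelopes.
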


\subsection{Strip separation with blue outliers}
\label{sub:Strip_Blue_outliers}
We now consider the case when all red points must be correctly classified, and we minimize the number of blue outliers $k_B$. Seara~\cite{seara2002geometric} uses a very similar algorithm to find the minimum number of strips needed to perfectly separate $B$ and $R$.

We are looking for a strip of two lines $\ell_1$ and $\ell_2$ containing no red points and as many blue points as possible. In the dual this is a vertical segment $\ol{\ell_1^*\ell_2^*}$ intersecting no red lines and as many blue lines as possible. That means the segment must lie in a face of $\A(R^*)$; similar to before we can always extend a segment until its endpoints lie on red lines.

Say we wish to find the best segment at a fixed $x$-coordinate, so on a vertical line $z$. Line $z$ is divided into $m+1$ intervals by the $m$ red lines, where each interval is a possible segment. This segment intersects exactly those blue lines that intersect $z$ in the same interval, so we are looking for the red interval in which the most blue intersections lie.

\subparagraph{Algorithm.}
Calculate all $O(n^2)$ intersections between lines in $B^* \cup R^*$, and sort them. We sweep through them from left to right with a vertical line $z$. At any time, there are $m+1$ red intervals on the sweepline. Number the intervals $0$ to $m$ from bottom to top. We maintain a list $S$ of size $m+1$, such that $S[i]$ contains the number of blue lines intersecting $z$ in interval $i$. Additionally, for every red line $r_j$ we maintain the (index of the) interval $a_j$ above it. There are 3 types of events:

\begin{itemize}
    \item Red-red intersection between lines $r_{j}$ and $r_{k}$, with the slope of $r_{j}$ larger than that of $r_{k}$. This means red interval $a_{j}$ collapses and opens again. We adjust the adjacent intervals of both lines accordingly, by incrementing $a_{j}$ and decrementing $a_{k}$.
    \item Blue-blue intersection: two blue lines change places in an interval, but the number of blue lines in the interval stay the same, so we do nothing.
    \item Red-blue intersection between red line $r_j$ and blue line $b$. Line $b$ moves from one interval to an adjacent one. Specifically, if the slope of $r_j$ is larger than that of $b$ we decrement $S[a_j]$ and increment $S[a_j - 1]$, and otherwise we increment $S[a_j]$ and decrement $S[b_j - 1]$. 
\end{itemize}

Each event is handled in constant time. Sorting the events takes $O(n^2 \log n)$ time.

\begin{theorem}
    Given two sets of $n$ points $B, R \subset \R^2$, we can construct a strip containing the most points of $B$ and no points of $R$ in $O(n^2 \log n)$ time.
\end{theorem}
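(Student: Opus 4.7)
I would work entirely in the dual, where the task is to locate a vertical segment $\ol{\ell_1^*\ell_2^*}$ that crosses no red line and intersects as many blue lines as possible. Fix an $x$-coordinate and consider the vertical line $z$ at that $x$; the $m$ red lines partition $z$ into $m+1$ intervals, and a candidate segment must lie entirely inside one of them. Within that interval, the number of blue intersections with the candidate equals the number of blue lines crossing $z$ inside the interval. Hence, writing $c_i(x)$ for the number of blue lines intersecting the $i$-th red interval on the sweepline at parameter $x$, our problem becomes: maximize $c_i(x)$ over all $i$ and all $x$.

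The algorithm is a left-to-right sweep over all $O(n^2)$ pairwise intersections in $B^*\cup R^*$. First I would sort all intersections by $x$-coordinate in $O(n^2\log n)$ time to produce the event queue, and initialize at $x=-\infty$ by sorting the lines by slope; this immediately gives the initial interval order, the indices $a_j$ recording the interval just above each red line $r_j$, and the initial counts $S[i]=c_i(-\infty)$, all in $O(n\log n)$ time. I would also keep a single scalar $\mathrm{best}$ holding the largest value of $S[\cdot]$ ever observed; since we only need the global maximum over the sweep, $\mathrm{best}$ can be refreshed in $O(1)$ after any change to $S$ by comparing against the new value.

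The sweep then processes three event types in constant time each, as sketched above. A blue-blue swap does not change any $c_i$. A red-blue intersection between $r_j$ and $b$ moves $b$ across $r_j$, so exactly one of $S[a_j]$ or $S[a_j-1]$ is incremented and the other decremented, the direction being determined in $O(1)$ from the slopes. A red-red swap between $r_j$ and $r_k$ (with $r_j$ more steeply sloped) reorders the two adjacent intervals, so we simply update $a_j$ and $a_k$ and relabel the collapsing interval's count with the reopening one; after each update we compare $\mathrm{best}$ with the modified entries of $S$.

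The main obstacle is ensuring that red-red swaps really are $O(1)$. One needs the interval that momentarily collapses to be empty of blue lines at the instant of the swap, so that its count can be transferred cleanly. This is guaranteed by the general-position assumption, which forces every blue-red intersection on this interval to occur at a strictly different $x$-coordinate, so such intersections have already been processed separately. With this point dispatched, the sweep performs $O(n^2)$ events, each at $O(1)$ cost, for a total of $O(n^2\log n)$ after adding the sorting step, and returns the strip dual to the best interval ever recorded.
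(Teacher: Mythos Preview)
Your proposal is correct and follows essentially the same approach as the paper: dualize, sweep a vertical line over the sorted $O(n^2)$ intersections of $B^*\cup R^*$, maintain the array $S$ of blue counts per red interval together with the indices $a_j$, and handle the three event types in $O(1)$ each. Your added remarks about tracking the running maximum in $O(1)$ and about the collapsing interval being empty at a red--red swap (by general position) are valid refinements that the paper leaves implicit.
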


\subsection{Strip separation with both outliers}
\label{sub:strip_Both_outliers}
Finally we consider the case where we allow both red and blue outliers, and we minimize the total number of outliers $k$. We again consider the dual in which $\W_B(\ell_1,\ell_2)$ corresponds to
a vertical segment $\overline{\ell_1^*\ell_2^*}$. By
Observation~\ref{lem:opt_strip} there is an optimal solution where: (i)
$\ell_2^*$ is a vertex of $\A(B^* \cup R^*)$ and $\ell_1^*$ lies on a
line from $B^* \cup R^*$ above $\ell^*_2$, or (ii) vice versa. We
present an $O(n^2\log n)$ time algorithm to find the best solution of
type (i). Computing the best solution of type (ii) is analogous.

We again sweep the arrangement $\A(B^* \cup R^*)$ with
a vertical line. During the sweep we maintain a data structure storing
the lines intersected by the sweep line in bottom-to-top order, so
that given a vertex $\ell_2^*$ on the sweepline we can efficiently
find a corresponding point $\ell_1^*$ above $\ell_2^*$ for which
$|\E(\ell_1,\ell_2)|$ is minimized. In particular, we argue that
we can answer such queries in $O(\log n)$ time, and support updates
(insertions and deletions of lines) in $O(\log n)$ time. It then
follows that we obtain an $O(n^2\log n)$ time algorithm by performing
$O(1)$ updates and one query at every vertex of $\A(B^*\cup R^*)$.

\subparagraph{Finding an optimal line $\ell_1$.} Fix a point
$\ell_2^*=(x,y_2)$, and consider the number of blue outliers
$k_B(y_1) = |\E_B(\ell_1,\ell_2)|$ in a strip with
$\ell_1^*=(x,y_1)$. Observe that $k_B(y_1)$ is the number of blue
lines passing below $\ell_2^*$ plus the number of blue lines passing
above $\ell^*_1$. Hence $k_B(y_1)$ is a non-increasing piecewise
constant function of $y_1$. Analogously, the number of red outliers
$k_R(y_1)$ is the number of red lines passing in between $\ell_1$ and
$\ell_2$. This function is non-decreasing piecewise constant function
of $y_1$. See Figure~\ref{fig:strip_both_outliers}. We have
$k(y_1)=k_R(y_1)+k_B(y_1)$, and we are interested in the value
$\hat{y} = \arg\min_{y_1} k(y_1)$ where $k$ attains its minimum.

\begin{figure}[tb]
  \centering
  \includegraphics{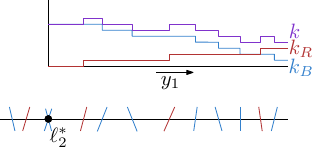}
  \caption{A snapshot of the sweepline (rotated to be horizontal), and
    the functions $k_B$, $k_R$, and $k$ expressing the number of
    outliers as a function of $y_1$.
  }
  \label{fig:strip_both_outliers}
\end{figure}

\subparagraph{The data structure.} We now argue we can maintain an
efficient representation of the function $k$ in case
$\ell_{2,y}^* = -\infty$. We then argue that we can also use the
structure to query with other values of $\ell_2^*$. Our data structure
is a fully persistent red-black tree~\cite{driscoll89makin_data_struc_persis} that stores the lines of
$B^* \cup R^*$ in the order in which they intersect the vertical
(sweep)line at $x$. We annotate each node $\nu$ with: (i) the number
$k_R^\nu$ of red lines in its subtree, (ii) the number $k_B^\nu$ of
blue lines in its subtree, (iii) the minimum value $\min^\nu$ of
$k(y_1)$ when restricted to all lines in the subtree rooted at $\nu$,
and (iv) the value $\hat{y}^\nu$ achieving that minimum. Let $\ell$
and $r$ be the children of $\nu$, and observe that
$\min^\nu = \min \{ \min^\ell + k_B^r , \min^r + k_R^\ell \}$. Hence,
$\min^\nu$ can be (re)computed from the values of its children. The
same applies for $k_R^\nu$ and $k_B^\nu$. Therefore, we can easily
support inserting or deleting a line in $O(\log n)$ time. Indeed,
inserting a red line that intersects the vertical line at $x$ in $y$,
increases the error either for all values $y' > y$ or for all value
$y' < y$ by exactly one, hence this affects only $O(\log n)$ nodes in
the tree.

Observe that for $\ell^*_{2,y}= -\infty$ the root $\nu$ of the tree
stores the value $\min^\nu = \min_y k(y)$, and the value $\hat{y}^\nu$
attaining this minimum. Hence, for such queries we can report the
answer in constant time. To support querying with a different value of
$\ell^*_{2,y}$, we simply split the tree at $\ell^*_{2,y}$, and use
the subtree storing the lines above $\ell^*_2$ to answer the
query. Observe that the number of blue lines below $\ell_2^*=(x,y_2)$
is a constant with respect to $y_1 \geq y_2$. Hence, it does not
affect the position at which $\min_{y > y_2} k(y)$ attains its
minimum. Splitting the tree and then answering the query takes
$O(\log n)$ time. After the query we discard the two subtrees and
resume using the original one, which we still have access to as the
tree is fully persistent.
We thus obtain the following result:

\begin{theorem}
  \label{thm:opt_strip}
    Given two sets of $n$ points $B, R \subset \R^2$, we can compute a strip $\W_B$ minimizing the total number of outliers $k$ in $O(n^2 \log n)$ time.
\end{theorem}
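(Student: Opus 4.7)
The plan is to stay in the dual picture introduced at the start of the section, where an optimal strip corresponds to a vertical segment $\overline{\ell_1^*\ell_2^*}$, and to exploit Lemma~\ref{lem:opt_strip}: in an optimum one endpoint is a vertex of $\A(B^*\cup R^*)$ and the other lies on a line of $B^*\cup R^*$ above (or below) it. I will handle the ``type (i)'' case where $\ell_2^*$ is the vertex; the other case is symmetric. The strategy is a left-to-right sweep of $\A(B^*\cup R^*)$ in which, at each of the $O(n^2)$ arrangement vertices, I spend $O(\log n)$ time updating a data structure that tracks which lines cross the sweepline in which order, and then $O(\log n)$ time answering a single query for the optimal $y_1 > y_2$, where $\ell_2^* = (x,y_2)$.

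For the query itself I need to minimize $k(y_1) = k_R(y_1) + k_B(y_1)$ over $y_1 > y_2$. As observed above the theorem, $k_B$ is non-increasing piecewise constant in $y_1$ (with decrements at blue crossings) while $k_R$ is non-decreasing piecewise constant (with increments at red crossings). I will maintain a balanced BST over the lines of $B^*\cup R^*$ keyed by the $y$-coordinate at which they cross the current sweepline, annotating each node $\nu$ with subtree counts $k_R^\nu, k_B^\nu$ together with $\min^\nu$ and its argmin $\hat{y}^\nu$, enforcing the recurrence
\[
\min^\nu \;=\; \min\bigl\{\,\min^{\mathit{left}} + k_B^{\mathit{right}},\ \min^{\mathit{right}} + k_R^{\mathit{left}}\,\bigr\}.
\]
All four fields propagate in $O(1)$ up the tree, so insertions, deletions, and rotations during the sweep can be supported in $O(\log n)$; each event adds or removes only a single line, which shifts $k$ by $\pm 1$ on a one-sided interval and hence affects only the $O(\log n)$ nodes on the root-to-leaf path.

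To answer a query at a given $y_2$, I will split the BST at $y_2$ and read $\hat{y}^\nu$ from the root of the upper subtree; the blue lines below $\ell_2^*$ contribute only a constant additive to $k$ on the relevant range of $y_1$ and do not shift the argmin. Since the sweep must still operate on the unsplit tree afterward, I will make the BST fully persistent in the style of Driscoll~\etal~\cite{driscoll89makin_data_struc_persis}, perform the split on a throwaway version, and discard it after reading the answer. Counting $O(n^2)$ updates and queries at $O(\log n)$ each gives the stated $O(n^2\log n)$ bound; running the symmetric sweep handles the ``type (ii)'' solutions within the same budget.

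The step I expect to be the main obstacle is verifying that the annotation $\min^\nu$ really tracks the minimum of $k$ restricted to the lines in $\nu$'s subtree, in particular that the cross terms $k_B^{\mathit{right}}$ and $k_R^{\mathit{left}}$ in the recurrence correctly account for the monotone jumps of $k_B$ and $k_R$ across the split between a node's children, uniformly over all heights $y_1$ in the subtree. Once that invariant is pinned down, the remaining ingredients --- BBST rotations, persistent copying, and binary searching for the split point --- are standard and slot directly into the sweep.
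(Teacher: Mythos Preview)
Your proposal is correct and essentially identical to the paper's own proof: the same dual sweep over $\A(B^*\cup R^*)$, the same persistent balanced BST on the lines ordered by their $y$-intersection with the sweepline, the same four node annotations with the recurrence $\min^\nu = \min\{\min^{\mathit{left}} + k_B^{\mathit{right}},\ \min^{\mathit{right}} + k_R^{\mathit{left}}\}$, and the same split-and-discard query via persistence. Even the justification that blue lines below $\ell_2^*$ add only a constant to $k$ and hence do not shift the argmin matches the paper verbatim.
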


\section{Separation with a wedge}
\label{sec:single_wedge}
We consider the case where the region $\W_B$ is a single wedge and $\W_R$ is the other three
wedges. In Sections \ref{sub:Single_wedge_with_red_outliers}, \ref{sub:Single_wedge_with_blue_outliers}, and \ref{sub:single_wedge_both_outliers} we show how to minimize $k_R$, $k_B$, and $k$, respectively.

\subsection{Wedge separation with red outliers}
\label{sub:Single_wedge_with_red_outliers}

We distinguish between $\W_B$ being an $\East$ or $\West$ wedge, and a $\North$ or $\South$ wedge. In either case we can compute optimal lines $\ell_1$ and $\ell_2$ defining $\W_B$ in $O(n\log n)$ time.

\subparagraph{Finding an East or West wedge.}
We wish to find two lines $\ell_1$ and $\ell_2$ such that every blue point and as few red points as possible lie above $\ell_1$ and below $\ell_2$.
In the dual this corresponds to two points $\ell_1^*$ and $\ell_2^*$ such that all blue lines and as few red lines as possible lie below $\ell_1^*$ and above $\ell_2^*$, as in Figure~\ref{fig:arrangement}.

\begin{figure}[tb]
    \centering
    \includegraphics{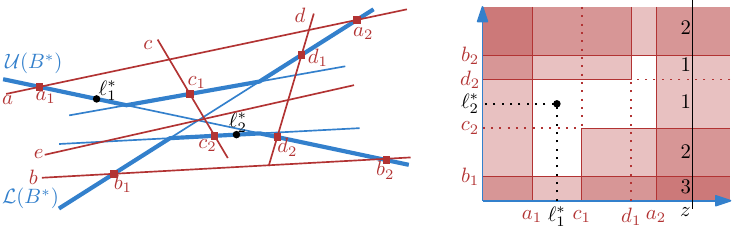}
    \caption{The arrangement of $B^* \cup R^*$ with its parameter space and forbidden regions.}
    \label{fig:arrangement}
\end{figure}

Clearly $\ell_1^*$ must lie above $\U(B^*)$, and $\ell_2^*$ below $\L(B^*)$, and again we can assume they lie on $\U(B^*)$ and $\L(B^*)$, respectively. We now have two degrees of freedom, one for choosing $\ell_1^*$ and one for choosing $\ell_2^*$.
Again we parameterize $\U(B^*)$ and $\L(B^*)$, but this time over $\R^2$, such that a point $(p,q)$ in this parameter space corresponds to two dual points $\ell_1^*$ and $\ell_2^*$, with $\ell_1^*$ on $\U(B^*)$ at $x = p$ and $\ell_2^*(y)$ on $\L(B^*)$ at $x = q$, as illustrated in Figure \ref{fig:arrangement}.
We wish to find a value in our parameter space whose corresponding segment minimizes the number of red misclassifications. 
Recall the forbidden regions of a red line $r$ are those regions in the parameter space in which corresponding segments misclassify $r$. We distinguish between five types of red lines, as in Figure \ref{fig:arrangement}:

\begin{itemize}
    \item Line $a$ intersects $\U(B^*)$ in points $a_1$ and $a_2$, with $a_1$ left of $a_2$.
    Only segments with $\ell_1^*$ left of $a_1$ or right of $a_2$ misclassify $a$.
    This produces two forbidden regions: $(-\infty, a_1) \times (-\infty, \infty)$ and $(a_2, \infty) \times (-\infty, \infty)$.

    \item Line $b$ intersects $\L(B^*)$ in points $b_1$ and $b_2$, with $b_1$ left of $b_2$. Symmetric to line $a$ this produces forbidden regions $(-\infty, \infty) \times (-\infty, b_1)$ and $(-\infty, \infty) \times (b_2, \infty)$.

    \item Line $c$ intersects $\U(B^*)$ in $c_1$ and $\L(B^*)$ in $c_2$, with $c_1$ left of $c_2$. Only segments with endpoints after $c_1$ and before $c_2$ misclassify $c$. This produces the region $(c_1, \infty) \times (-\infty, c_2)$. (Segments with endpoints before $c_1$ and after $c_2$ do intersect $c$, but do not misclassify it.)
    
    \item Line $d$ intersects $\U(B^*)$ in $d_1$ and $\L(B^*)$ in $d_2$, with $d_1$ right of $d_2$. Symmetric to line $c$ it produces the forbidden region $(-\infty, d_1) \times (d_2, \infty)$.

    \item Line $e$ intersects neither $\U(B^*)$ nor $\L(B^*)$. All segments misclassify $e$. In the primal this corresponds to red points inside the blue convex hull. This produces one forbidden region; the entire plane $\R^2$.
\end{itemize}

As in Section \ref{sub:Strip_Red_outliers}, the above list is exhaustive.

The forbidden regions generated by the red lines $r^* \in R^*$ divide the parameter space in axis-aligned orthogonal regions. 
Our goal is again to find a point with minimum ply in these forbidden regions. For this we prove the following lemma:

\begin{lemma}
\label{lem:minimum_ply}
    Given a set $\mathcal{R}$ of $n$ constant complexity, axis-aligned, orthogonal regions, we can compute the point with minimum ply in $O(n \log n)$ time.
\end{lemma}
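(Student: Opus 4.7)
The plan is to decompose each region into interior-disjoint axis-aligned rectangles and then solve the minimum-depth problem on the resulting $O(n)$ rectangles via a vertical sweep. Because each region in $\mathcal{R}$ has constant complexity, it splits in constant time into $O(1)$ interior-disjoint (possibly unbounded) axis-aligned rectangles. Since the pieces of a single region are interior-disjoint, the ply of a point with respect to $\mathcal{R}$ equals its depth in the resulting collection $\mathcal{R}'$ of $O(n)$ rectangles, so it suffices to find a point of minimum depth in $\mathcal{R}'$.

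I would compute this minimum depth by sweeping a vertical line from left to right over the arrangement of $\mathcal{R}'$. First, sort the $O(n)$ vertical edges by $x$-coordinate and sort the $O(n)$ distinct $y$-coordinates of horizontal edges, inserting $\pm\infty$ sentinels to accommodate unbounded rectangles. Over the resulting elementary $y$-slabs I build a segment tree in which each leaf stores the depth of its slab along the current sweep line, each internal node stores the minimum depth in its subtree, and range updates are maintained by standard lazy propagation. Inserting (respectively deleting) a rectangle with $y$-range $[y_i,y_j]$ is then a range $+1$ (respectively $-1$) update in $O(\log n)$ time.

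I would sweep through the $O(n)$ events; after each one, read the global minimum from the root in $O(1)$ and, whenever it improves the best value seen so far, descend the tree in $O(\log n)$ time to extract a witness leaf. Combining the witness slab with the current sweep $x$ yields an explicit minimum-ply point. Correctness follows because the depth function is constant on each face of the arrangement of $\mathcal{R}'$, and each face has an interior point visible immediately after some event, so every depth value attained is inspected. The total time is $O(n\log n)$ for sorting plus $O(n)\cdot O(\log n)$ for processing events, i.e.\ $O(n\log n)$.

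The main obstacle I anticipate is handling unbounded rectangles and the point at infinity cleanly: the segment tree must be initialized with the contributions of all rectangles extending to $x=-\infty$, and the $\pm\infty$ values must be treated as valid slab boundaries so that an unbounded minimum-depth face is still represented by some leaf. Both issues are routine once sentinels are put in place and do not affect the asymptotic running time.
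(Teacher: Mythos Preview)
Your proposal is correct and follows essentially the same approach as the paper: decompose each region into $O(1)$ axis-aligned rectangles, then sweep a vertical line while maintaining a segment tree on the $y$-slabs that supports range $\pm 1$ updates and reports the current minimum depth at the root. Your explicit remark that the rectangles of a single region must be interior-disjoint (so that ply in $\mathcal{R}$ equals depth in $\mathcal{R}'$) and your handling of unbounded rectangles via sentinels are details the paper leaves implicit, but otherwise the two arguments coincide.
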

\begin{proof}
    We sweep through the plane with a vertical line $z$ while maintaining a minimum ply point on $z$. See Figure \ref{fig:arrangement} (right) for an illustration.
    As a preprocessing step, we cut each region into a constant number of axis-aligned (possibly unbounded) rectangles, and build the skeleton of a segment tree on the $y$-coordinates of the vertical sides of these rectangles in $O(n \log n)$ time~\cite{deberg08computational_geometry}.
    This results in a binary tree with a leaf for each elementary $x$-interval induced by the segments. A node $v$ corresponds to the union of the intervals of its children. The canonical subset of $v$ is the set of intervals containing $v$ but not the parent of $v$. For a node $v$ we store the size $s(v)$ of its canonical subset, the minimum ply $\ply(v)$ within the subtree of $v$, and a point attaining this minimum ply.
    
    We start with $z$ at $-\infty$ and sweep to the right. When we encounter the left (respectively right) side of a rectangle with vertical segment $I = (y_1,y_2)$, insert (respectively delete) $I$ in the segment tree. Since we already constructed the skeleton of this tree, the endpoints of $I$ are already present, so the shape of the tree does not change. Updating $s(v)$ takes $O(\log n)$ time, since $I$ is in the canonical subset of only $O(\log n)$ nodes. The minimum ply in a node $v$ with children $c_1,c_2$, and a point attaining this minimum, can be updated simultaneously, in a bottom-up order: $\ply(v) = s(v) + \min(\ply(c_1), \ply(c_2))$. After every update, the root node stores the current minimum ply. We maintain and return the overall minimum ply over all positions of the sweepline.
    
    Since there are $O(n)$ rectangles, each of which is added and removed once in $O(\log n)$ time, this leads to a running time of~$O(n \log n)$.
\end{proof}

We construct $\U(B^*)$ and $\L(B^*)$ in $O(n \log n)$ time. For every red line $r$, we calculate its intersections with $\U(B^*)$ and $\L(B^*)$ in $O(\log n)$ time, determine its type ($a-e$), and construct its forbidden regions. By Lemma~\ref{lem:minimum_ply} we can find a point with minimum ply in these forbidden regions in $O(n \log n)$ time. 

\begin{lemma}
    Given two sets of $n$ points $B, R \subset \R^2$, we can construct an \East or \West wedge containing all points of $B$ and the fewest points of $R$ in $O(n \log n)$ time.
\end{lemma}

\subparagraph{Finding a North or South wedge}
We wish to find two lines $\ell_1$ and $\ell_2$ such that $\W_B(\ell_1,\ell_2) = \South(\ell_1,\ell_2)$, i.e., such that every blue point and as few red points as possible lie below both $\ell_1$ and $\ell_2$. The case where $\W_B(\ell_1,\ell_2) = \North(\ell_1,\ell_2)$ is symmetric. In the dual this corresponds to two points $\ell_1^*$ and $\ell_2^*$ such that all blue lines and as few red lines as possible lie above both $\ell_1^*$ and $\ell_2^*$.

Clearly both $\ell_1^*$ and $\ell_2^*$ must lie below $\L(B^*)$, and again we can assume they lie on $\L(B^*)$. 
Similar to before we parameterize the $x$-coordinate of both points over $\R^2$, such that a point $(p,q)$ in the parameter space corresponds to two dual points $\ell_1^*$ and $\ell_2^*$, with $\ell_1^*$ on $\L(B^*)$ at $x = p$ and $\ell_2^*$ on $\L(B^*)$ at $x = q$.
Note that the resulting parameter space~$P$ is symmetric over $y = x$, since $\ell_1^*$ and $\ell_2^*$ are interchangeable. 

Let $s_{\min}$ ($s_{\max}$) be the minimum (maximum) slope of all blue lines.
There are now four types of red lines, as illustrated in Figure~\ref{fig:arrangementUpwards}:

\begin{itemize}
    \item line $a$: intersects $\L(B^*)$ twice in $a_1$ and $a_2$.
    Line $a$ is misclassified if both $\ell_1^*$ and $\ell_2^*$ lie below $a$.
    In the parameter space, this corresponds to four forbidden corners of the parameter space, as in Figure~\ref{fig:arrangementUpwards}.
    \item line $b$: intersects $\L(B^*)$ once in $b_1$ and has a slope~$s\in(-\infty,s_{\min})$.
    Only segments with both endpoints left of $b_1$ misclassify $b$, producing a forbidden bottomleft quadrant in the parameter space~$P$.
    \item line $c$: intersects $\L(B^*)$ once in $c_1$ and has a slope~$s\in(s_{\max},\infty)$.
    Similar to $b$, this produces a forbidden topright quadrant.
    \item line $d$: does not intersect $\L(B^*)$.
    This point will always be misclassified by a \North wedge.
    In the primal, this corresponds to red points lying in or directly below the convex hull of the blue points.
\end{itemize}

\begin{figure}[tb]
    \centering
    \includegraphics{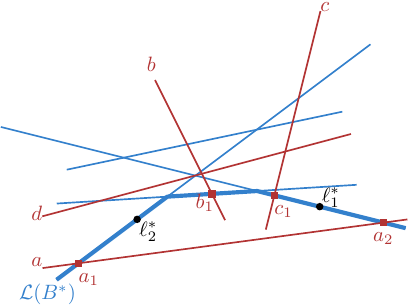}\hfil
    \includegraphics{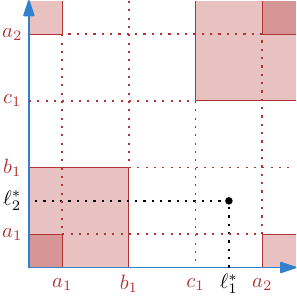}
    \caption{Left: The arrangement of lines $B^*$ and $R^*$. Right: The corresponding parameter space~$P$ and forbidden regions.}
    \label{fig:arrangementUpwards}
\end{figure}

As before we construct all forbidden regions, and apply the algorithm of Lemma~\ref{lem:minimum_ply} to obtain a point in the parameter space with minimum ply in $O(n \log n)$ time.

\begin{lemma}
    Given two sets of $n$ points $B, R \subset \R^2$, we can construct a $\North$ or $\South$ wedge containing all points of $B$ and the fewest points of $R$ in $O(n \log n)$ time.
\end{lemma}

\begin{theorem}
    Given two sets of $n$ points $B, R \subset \R^2$, we can construct a wedge containing all points of $B$ and the fewest points of $R$ in $O(n \log n)$ time.
\end{theorem}

\subsection{Wedge separation with blue outliers}
\label{sub:Single_wedge_with_blue_outliers}
We now consider the case where all red points must be classified
correctly, and we minimize the number of blue outliers $k_B$. We show
how to find an optimal \North wedge; finding optimal \South, \East, or
\West wedges can be done analogously.

\begin{figure}[tb]
    \centering
    \includegraphics{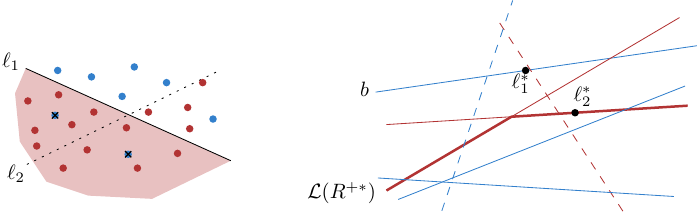}
    \caption{Left: in the primal we need to consider only the points above $\ell_1$. Right: in the dual we need to consider only the lines below $\ell_1^*$. In particular, $\ell_1^*$ should lie below (on) $\L(R^{+*})$.}
    \label{fig:wedgeBlueOutliersDual}
\end{figure}

Fix line $\ell_1$ and consider the problem of finding an optimal
corresponding line $\ell_2$. All points below $\ell_1$ lie outside the
\North wedge, regardless of our choice of $\ell_2$, and thus we have
to consider only the points $B^+ \subseteq B$ and $R^+ \subseteq R$
above $\ell_1$. See the left of Figure \ref{fig:wedgeBlueOutliersDual}. We
do not allow red points in the \North wedge, so $\ell_2$ must lie
above all red points $R^+$ and below as many blue points $B^+$ as
possible. This is exactly the halfplane separation problem with blue
outliers we solved in Section
\ref{sec:Single_line_separation_one-sided} in $O(n \log n)$ time. We
can iterate through all $O(n^2)$ options for $\ell_1$ (by walking
through $\A(B^* \cup R^*)$) and compute the corresponding line
$\ell_2$ in $O(n \log n)$ time, which would lead to an $O(n^3 \log n)$
time algorithm. Below we describe an algorithm that avoids recomputing
$\ell_2$ from scratch every time, giving an $O(n^{5/2}\log n)$
time algorithm. We first give an overview.

Let $L$ be a set of lines, and let the \emph{level} $\level_L(p)$ of a
point $p$ (with respect to $L$) be the number of lines of $L$ that lie
below $p$. We define the level
$\level_L(s) = \max_{p \in s} \level_L(p)$ of a segment $s$ (with
respect to $L$) as the maximum level of any point $p$ on $s$.

Consider the dual, where we are looking for two points $\ell_1^*$ and
$\ell_2^*$ such that no red line and as many blue lines as possible
lie below both $\ell_1^*$ and $\ell_2^*$. By Lemma
\ref{lem:redBlueOptimum} we can assume both $\ell_1^*$ and $\ell_2^*$
lie on a red-blue intersection.  For a fixed point $\ell_1^*$ we are
interested only in the set of lines $B^{+*}$ and $R^{+*}$ below
$\ell_1^*$, and since $\ell_2^*$ must lie below all of $R^{+*}$
we can assume it lies on its lower envelope $\L(R^{+*})$. See the right of Figure
\ref{fig:wedgeBlueOutliersDual}. The wedge $\North(\ell_1, \ell_2)$
correctly classifies exactly $\level_{B^{+*}}(\ell_2^*)$ points. We
are thus looking for the pair of points $\ell_1^*$ and $\ell_2^*$ that
maximize the level $\level_{B^{+*}}(\ell_2^*)$.

We now show that we can compute $\ell_2^*$ efficiently for every candidate point
$\ell_1^*$, provided that there is an oracle that can answer (a batch
of) the following queries: given a point $\ell_1^*$ and a line segment $s$
lying on a red line, compute the level $\level_{B^{+*}}(s)$. We then
show that we can implement an oracle that answers all $O(n^2)$ queries
in $O(n^{5/2}\log n)$ time. This yields an
$O(n^{5/2}\log n)$ time algorithm to compute an optimal north wedge.

\subparagraph{Using an oracle to maintain $\ell_2^*$.}
Consider any blue line $b \in B$ and assume w.l.o.g. that it is horizontal. We will shift $\ell_1^*$ from left to right along $b$,
maintaining the set of red lines $R^{+*}$ below $\ell_1^*$. During
this shift $\ell_1^*$ crosses each of the other $n-1$ lines at most
once. We wish to maintain $\L(R^{+*})$ and a point with maximum level
w.r.t. $B^{+*}$ over all edges of $\L(R^{+*})$. Such a point corresponds to an optimal second point $\ell_2^*$
for the current point $\ell_1^*$. By repeating this shift for every
blue line $b \in B^*$ we consider all $O(n^2)$ candidate points
for $\ell_1^*$ and their corresponding optimal point $\ell_2^*$. This
thus allows us to report an optimal solution.

We first show that, while shifting $\ell_1^*$ along $b$, the number of
changes to $\L(R^{+*})$ is small. This allows us to explicitly
maintain the edges of $\L(R^{+*})$ in the leaves of a binary tree
(ordered on increasing $x$-coordinate). We refer to this tree as
the \emph{explicit tree} of $\L(R^{+*})$. We then augment this
explicit tree to additionally maintain the maximum level over all its edges.

\begin{lemma}
  \label{lem:setESizeN}
  Let $E$ be the union of all red edges that ever appear on
  $\L(R^{+*})$ while shifting $\ell_1^*$ along $b$. In other words,
  an edge $e$ is in $E$ iff there is a point $\ell_1^*$ on $b$ such
  that $e$ appears on $\L(R^{+*})$. This set $E$ has size $O(n)$.
\end{lemma}

\begin{proof}
  We partition the set of red lines $R^*$ into two sets of lines $P^*$
  and $G^*$, so that $P^*$ contains the lines with positive slope (we will refer to them as the Pink lines) and $G^*$ contains the lines
  with negative slope (the Green lines). See Figure
  \ref{fig:wedgeBlueGreenPink}. As before, we define $G^{+*}$ and
  $P^{+*}$ to be the green, respectively pink, lines below
  $\ell_1^*$. Let $E_G$ and $E_P$ be the set of edges that ever appear
  on $\L(G^{+*})$ and $\L(P^{+*})$, respectively.

\begin{figure}[tb]
    \centering
    \includegraphics{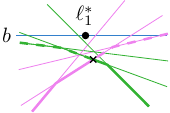}
    \caption{The set $R^*$ is partitioned into green (negative slope)
      and pink (positive slope) lines. Their lower envelopes intersect
      at most once.}
    \label{fig:wedgeBlueGreenPink}
\end{figure}

We first show that $E_G$ has size $O(n)$.
Due to the green lines all having
negative slope, we only ever insert lines into $G^{+*}$ when shifting
$\ell_1^*$ from left to right.
An insertion of a line can create
at most three new edges: one on the line itself, and at most two (now shortened) edges
on other intersected lines. Since there are $O(n)$ line insertions, that means
there are $O(n)$ edges in $E_G$ in total.
It follows from a symmetric argument that $E_P$ has size $O(n)$.

Consider some point $\ell_1^*$ on $b$. Note that $\L(R^{+*})$ is the
concatenation of a (possibly empty) part of $\L(P^{+*})$, followed by
a (possibly empty) part of $\L(G^{+*})$ (these parts are only empty
if $P^{+*}$ or $G^{+*}$ themselves are empty). This follows from the
fact that the edges on $\L(R^{+*})$ appear in order of decreasing
slope. Therefore, all edges of $\L(R^{+*})$ are also an edge in
$\L(P^{+*})$ or $\L(G^{+*})$, except the two edges where
$\L(P^{+*})$ and $\L(G^{+*})$ intersect. Since there are only two
such intersecting edges for each position of $\ell_1^*$, there are
$O(n)$ such edges in $E$ in total, and all other edges from $E$ are
either in $E_G$ or in $E_P$. It follows that $E$ has size $O(n)$.
\end{proof}

\begin{lemma}
\label{lem:redEnvelopeON}
While shifting $\ell_1^*$ along $b$, there are $O(n)$ edge-updates on $\L(R^{+*})$. 
\end{lemma}
\begin{proof}
  \cref{lem:setESizeN} argues that there are only $O(n)$ edges that
  ever appear on $\L(R^{+*})$. Let $E$ be the collection of these edges.
  What remains to show is that each
  edge in $E$ is inserted into $\L(R^{+*})$ only once. Again, we partition the lines $R^*$ into pink lines $P^*$ with positive slope and green lines $G^*$ with negative slope.

\begin{figure}[tb]
    \centering
    \includegraphics{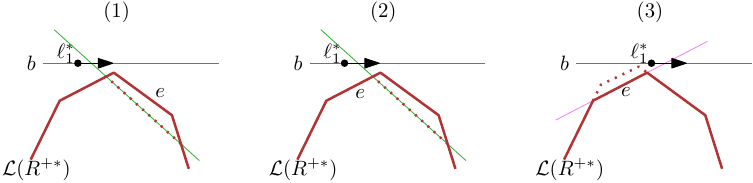}
    \caption{The three cases in which an edge $e$ can be deleted from $\L(R^{+*})$.}
    \label{fig:wedgeBlueShiftingUpdates}
\end{figure}

An edge $e \in E$ can be deleted from $\L(R^{+*})$ in only three
cases. For each case we argue that it can never be reinserted (see
Figure \ref{fig:wedgeBlueShiftingUpdates}). Edge $e$ is deleted when:

\begin{description}
\item[$\ell_1^*$ crosses a green line $g$ that lies below $e$.] Line
  $g$ will remain in $G^{+*}$, and thus $e$ will never appear on
  $\L(R^{+*})$ again.
\item[$\ell_1^*$ crosses a green line $g$ that intersects $e$.] We
  view this as deleting $e$, and inserting a new shorter edge. As in
  the previous case, line $g$ will always stay in $G^{+*}$, and thus
  $e$ will never appear on $\L(R^{+*})$ as a whole again.
\item[$\ell_1^*$ crosses a pink line $p$ containing $e$.] Line $p$ is
  permanently removed from $P^{+*}$, so clearly edge $e$ will never be
  inserted again.
\end{description}

So every edge $e \in E$ is inserted into $\L(R^{+*})$ once, and
therefore deleted at most once. Since $E$ has size $O(n)$,
it follows that there are only $O(n)$ edge-updates.
\end{proof}


\begin{lemma}
\label{lem:explicitEnvelope}
While shifting $\ell_1^*$ along $b$ we can maintain an explicit tree of $\L(R^{+*})$ in $O(n \log n)$ total time.
\end{lemma}
\begin{proof}
  We maintain the lower envelope of the set $R^{+*}$ in the data
  structure of Brodal and Jacob~\cite{brodal2002dynamic}, which can
  handle inserting and deleting a line in $O(\log n)$
  time. Furthermore, given a line $r$ it can report the edges of
  $\L(R^{+*})$ intersected by $r$, if any, and it can report the
  neighbouring edges of a given edge of $\L(R^{+*})$ in
  $O(\log n)$ time.

  We also need to support insertions and deletions into the explicit
  tree of $\L(R^{+*})$. We show that we can enumerate all edges that
  are to be inserted or deleted into this structure at an update to
  $R^{*+}$ in $O(\log n)$ time each. We can then delete and insert
  these edges from our explicit tree in $O(\log n)$ time each again
  (these are standard balanced binary tree operations), proving the
  lemma.

\begin{description}
\item[Insertion of a line $r$ into $R^{+*}$.] We query the Brodal and
  Jacob data structure for the edges of $\L(R^{+*})$ intersected by
  $r$ in $O(\log n)$ time. If $r$ and $\L(R^{+*})$ do not intersect
  we are done, so assume they do intersect. By general position, we
  can assume there are two intersections. All edges between these
  intersection are to be deleted. We walk through them and report
  them in $O(\log n)$ time each using the neighbour-queries of the
  Brodal and Jacob data structure. We insert exactly three new edges:
  one new edge on $r$, and two shortened intersected edges. Finally,
  we update the Brodal and Jacob data structure by inserting $r$ in
  $O(\log n)$ time.

\item[Deletion of a line $r$ from $R^{+*}$.] We do almost the same as for an insertion, except we change the order. We first update the Brodal and Jacob data structure, and only afterwards query for the intersections of $\L(R^{+*} \setminus \{ r \})$ and $r$. All edges between these intersections are to be inserted, and can be reported as above. Again there are three edges to be deleted: one on $r$ and two intersected by $r$.
\qedhere
\end{description}
\end{proof}


With the explicit tree at hand, we require only $O(n)$ queries to the oracle during the entire shifting process to maintain an optimal point $\ell_2^*$.

\begin{lemma}
  \label{lem:maintainMax}
  We can maintain an edge of $\L(R^{+*})$ with maximum level
  w.r.t. $B^{+*}$ while shifting $\ell_1^*$ along $b$ using $O(n)$
  queries to the oracle and $O(n\log n)$ additional
  time.
\end{lemma}
\begin{proof}
We maintain an explicit tree of $\L(R^{+*})$ using Lemma \ref{lem:explicitEnvelope}, and augment it. For a node $u$, let $\max(u)$ be the highest level of any edge in the
subtree of $u$. We wish to maintain $\max(u)$ for all nodes, but doing
so explicitly is expensive: when we cross a blue line we may need to
increment $\max(u)$ for many nodes. Therefore, we do so implicitly. We
augment each node $u$ with a \emph{level buffer} $b(u)$ and a
\emph{partial max} $\max'(u)$, under the invariant that
$\max(u) = \max'(u) + \sum_{a \in A(u)} b(a)$, where $A(u)$ is the set
of ancestors of $u$ (including $u$ itself). In particular, for the root node $u$ we have that $\max'(u)$ is the maximum level of any edge of $\L(R^{+*})$ w.r.t. $B^{+*}$, which is exactly what we want to maintain.

Whenever we visit a node $u$, we \emph{propagate the buffer}: we apply the buffer to the node by setting $\max'(u) \pluseq b(u)$, and for each child $c$ of $u$ we propagate the buffer by setting $b(c) \pluseq b(u)$. Lastly, we reset the buffer $b(u) = 0$. If the invariant held before this, it will still hold after this. This procedure ensures that for any node $u$ we visit it holds that $\max'(u) = \max(u)$: while walking towards $u$ we propagated the buffer at every step, so $b(a) = 0$ for all ancestors $a \in A(u)$.

There are two types of events: $\ell_1^*$ crosses a blue line, or $\ell_1^*$ crosses a red line.

\begin{figure}
    \centering
    \includegraphics{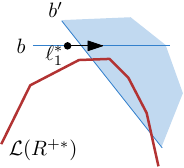}
    \caption{When we cross a blue line, the level of all edges above it changes.}
    \label{fig:wedgeBlueLevelDatastructure}
\end{figure}

\begin{description}
\item[$\ell_1^*$ crosses a blue line $b'$.] Assume $b'$ has negative
  slope, so we insert $b'$ into $B^{+*}$; the other case is
  analogous. See \cref{fig:wedgeBlueLevelDatastructure}. The levels of all edges of
  $\L(R^{+*})$ fully above $b'$ increase by one, since there is one more blue line below them. We can
  find $O(\log n)$ nodes of the tree representing these edges, and for
  all these nodes $u$ we increment $b(u)$ by one, satisfying the
  invariant for all edges in these subtrees. For the (at most) two
  edges intersected by $b$ the level may or may not have changed. So we query the oracle for the level $\level_{B^{+*}}(e)$ of
  such an edge $e$ w.r.t. $B^{+*}$. We then walk towards the leaf node
  $u$ containing $e$ (propagating the buffers as we go), and set
  $\max'(u) = \level_{B^{+*}}(e)$.

\item[$\ell_1^*$ crosses a red line $r$.] We update the explicit tree as normal, with one addition: when we create a
  leaf node $u$ for a new edge $e$ we query the oracle for the level
  $\level_{B^{+*}}(e)$ of $e$ w.r.t. $B^{+*}$, and set
  $\max'(u) = \level_{B^{+*}}(e)$.
\end{description}


For both events we spend only $O(\log n)$ time per edge-update, so maintaining the tree still takes $O(n \log n)$ total time. Additionally, we query the oracle $O(n)$ times: once per edge insertion,
and at most twice per blue line crossed. 
\end{proof}


\subparagraph{Collecting queries to the oracle.} What remains is to describe how
to implement the oracle that answers the queries. However, observe
that the set of queries to the oracle is fixed.
Also, the answer to an
oracle query is independent of the answers to earlier queries, and the answers of queries do not influence which future queries will be
performed. Therefore, we can perform a 'dry'-run of the algorithm
where we collect all queries, and then answer them in bulk
(during this 'dry'-run there is no need to maintain the values $b(u)$
and $\max'(u)$ yet). As we will see, this allows us to answer these
queries efficiently.

Since each blue line generates $O(n)$ queries
(Lemma~\ref{lem:maintainMax}), we have a total of $O(n^2)$
queries. Once we have the answers to all these queries, We once again
run the algorithm for each blue line $b$. In this 'real'-run of the
algorithm we can answer queries in $O(1)$ time, and thus compute
an optimal pair of points $\ell_1^*$ and $\ell_2^*$ with $\ell_1^*$ on
$b$ in $O(n\log n)$ time (Lemma~\ref{lem:maintainMax}). This then
leads the following result.

\begin{lemma}
  \label{lem:with_query_oracle}
  Given two sets of $n$ points $B,R \subset \R^2$, we can construct a
  wedge containing as many points of $B$ as possible and no points of
  $R$ in $O(T(n) + n^2\log n)$ time, where $T(n)$ is the total time
  required to answer $O(n^2)$ oracle queries.
\end{lemma}

\subparagraph{Implementing the oracle.}  We will now show how to
implement the oracle that can answer a (batch of) the following
queries efficiently. Given a query $(\ell_1^*, s)$ consisting of a
point $\ell_1^*$ and a red segment $s$, we wish to find the maximum
level of any point on $s$ w.r.t. the set $B^{+*}$ of blue lines below
$\ell_1^*$. Maintaining the set $B^{+*}$ and answering queries fully
dynamically is difficult, so we will instead answer them in bulk.

We consider a red line $r$ and assume w.l.o.g. that it is horizontal. Let $Q$ be the set of queries whose line segment $s$ lies on
$r$, let $q_r = |Q|$, and let $P$ be the set of query points
$\ell_1^*$ corresponding to the queries in $Q$. See Figure \ref{fig:wedgeBlueOutliersSpanningTree}.

We pick an arbitrary query $(\ell_1^*, s) \in Q$. Let $b \in B^{+*}$ be a blue line with negative slope; the other case is analogous. Consider the intersection point $i$ between $b$ and $r$. For points $p \in s$ left of $i$, $b$ lies above $p$ and thus $b$ does not add to the level of $p$. For points $p \in s$ right of $i$, $b$ lies below $i$ and thus $b$ does add to the level of $p$. Consider all intersections between $r$ and lines in $B^{+*}$. We build a balanced binary tree on these intersections, ordered by $x$-coordinate, augmented such that each node also stores the point with the highest level in its subtree. Recall that $s$ is a line segment on $r$, and thus represents an $x$-interval on $r$. We can easily answer the query $(s,\ell_1^*)$ by finding the $O(\log n)$ nodes representing that interval and returning the maximum level of any point inside their subtrees.

To answer the other queries we can shift the point $\ell_1^*$ through
the arrangement $\A(B^*)$. At every step, one blue line is inserted into or deleted from $B^{+*}$. We can update the binary tree in $O(\log n)$ time per step, meaning that if we reach a point $p \in P$ we can answer the corresponding query in $O(\log n)$ time. So, if we can walk through $\A(B^*)$ in $S$ steps, visiting all points $P$, we can answer all queries $Q$ in $O(S \log n)$ time. For this we use a \emph{spanning tree}.

A spanning tree on $P$ is a set of $q_r-1$ edges connecting the points
in $P$. The
stabbing number of a spanning tree is the maximum number of edges of
the tree that can be intersected by a single line. With high
probability (with probability $1 - 1/q^c$ for some arbitrarily large
constant $c$) we can build a spanning tree $T$ on $P$ with stabbing
number $O(\sqrt{q})$ in $O(q \log q)$
time~\cite{chan12partition_tree}. Thus, with high probability each blue line intersects $T$ at most $O(\sqrt{q_r})$ times, and thus there are $O(n \sqrt{q_r})$ intersections between $T$ and $B$. There is randomness involved here; if there are more then $O(n \sqrt{q_r})$ intersections between $T$ and $B$ we simply rebuild the tree, making our final running time expected rather than deterministic.

If we follow the spanning tree while walking through $\A(B^*)$ we will thus cross $O(n\sqrt{q_r})$ blue lines in total while visiting all points $P$, meaning we can answer all queries $Q$ on $r$ in $O(n \sqrt{q_r} \log n)$ time (note that the $O(q_r \log q_r)$ time to build the spanning tree is dominated by $O(n \sqrt{q_r} \log n)$ for any $q_r =O(n^2)$). Doing this for all
lines $r \in R$ takes $O(\sum_r (n \sqrt{q_r}\log n))$
time. Recall we have $O(n^2)$ queries in total, so we have
$\sum_r q_r = O(n^2)$. 
Since $\sqrt{\cdot}$ is a concave function, we have $\sqrt{a} + \sqrt{b} \leq \sqrt{2(a+b)}$ for any non-negative values $a$ and $b$.
More generally, $\sum_i \sqrt{x_i} \leq \sqrt{n\left( \sum_i x_i \right)}$ for any $n$ non-negative values $x_i$.
In particular, $\sum_r \sqrt{q_r} \leq \sqrt{n\left( \sum_r q_r \right)} = O(n^{3/2})$.
Therefore, we have an $O(\sum_r (n \sqrt{q_r}\log n)) = O(n^{5/2}\log n)$ time
algorithm to answer all queries over all red lines.

\begin{lemma}
  \label{lem:batch_query_lemma}
We can answer $O(n^2)$ queries in expected $O(n^{5/2} \log n)$ time.
\end{lemma}

Together with \cref{lem:with_query_oracle} this yields an expected $O(n^{5/2} \log n + n^2 \log n) = O(n^{5/2} \log n)$ time algorithm to find an optimal \North wedge.

We can symmetrically find an optimal \South wedge by assuming the wedge lies below both $\ell_1$ and $\ell_2$. Similarly by assuming the wedge lies below $\ell_1$ and above $\ell_2$ we can find an optimal \West or \East wedge.

\begin{theorem}
Given two sets of $n$ points $B,R \subset \R^2$, we can construct a wedge containing as many points of $B$ as possible and no points of $R$ in expected $O(n^{5/2} \log n)$ time.
\end{theorem}

\subsection{Wedge separation with both outliers}
\label{sub:single_wedge_both_outliers}
We now consider the case where we allow and minimize both red and blue
outliers. We show how to find an optimal \South wedge; finding an
optimal \West, \North, or \East wedge can be done analogously. We first
study the decision version of this problem: given an integer $k'$, does
there exist a \South wedge $\W_B$ with at most $k'$ outliers? We
present an $O(nk'^2 \log^3 n)$ time algorithm to solve this decision problem.
Using exponential search to guess the optimal value $k$ (i.e., guessing $k' = 1,2,4,8 \dots$ and then binary searching in the remaining interval) then leads
to an $O(nk^2 \log^3 n \log k)$ time algorithm to compute a wedge $\W_B$ that minimizes $k$.

\begin{figure}
\centering
\begin{minipage}{.5\textwidth}
  \centering
    \includegraphics{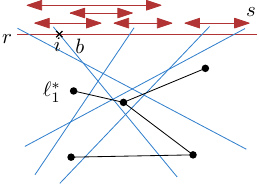}
    \caption{A set of queries on a red line $r$, with a spanning tree on $P$. Line $b$ contributes to the level of all points right of $i$.}
    \label{fig:wedgeBlueOutliersSpanningTree}
\end{minipage}
\hspace{.05\textwidth}
\begin{minipage}{.43\textwidth}
  \centering
    \includegraphics{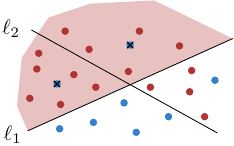}
    \caption{All points above $\ell_1$ lie outside the \South wedge. After fixing $\ell_1$, we are left with a halfplane separation problem.}
    \label{fig:wedgeBothOutliers}
\end{minipage} 
\end{figure}

In~\cref{lem:nkCandidateLines} we construct a small candidate set of lines that contains a line $\ell_1$ that is used by an optimal wedge.
Then our algorithm considers each line in this set and constructs an optimal wedge using it.

\begin{lemma}
\label{lem:nkCandidateLines}
In $O(nk' \log n)$ time, we can construct a set of $O(nk')$ lines that contains a line $\ell_1$ used by an optimal wedge.
\end{lemma}
\begin{proof}

Consider any line $\ell$ and suppose it is used in a \South wedge as the line $\ell_1$. Let $B^+$ and $B^-$ be the set of blue
points above, respectively below, $\ell_1$. Since we are looking for a \South wedge, all $k_{1} =
|B^+|$ blue points above $\ell_1$ are misclassified, regardless of
line $\ell_2$ (see~\cref{fig:wedgeBothOutliers}). Therefore, any
line with $k_{1} > k'$ blue points above it is not a suitable candidate
for $\ell_1$. In the dual plane, this means $\ell_1^*$ must lie in the
\emph{$(\leq k')$-level $L_{\leq k'}(B^*)$ of $B^*$}, the set of points with at most $k'$ lines of $B^*$ below them.
With a slight abuse of notation, we use $L_{\leq k'}(B^*)$ to refer to the sub-arrangement of $\A(B^*)$ that lies in the $(\leq k')$-level. The complexity of $L_{\leq k'}(B^*)$ is $O(nk')$, and we can construct $L_{\leq
  k'}(B^*)$ in $O(nk' + n \log n)$
time~\cite{everett1993lessThanKLevels}. 

Consider any line $r \in R^*$, and observe that it intersects $L_{\leq k'}(B^*)$ at most $O(k')$ times.
This follows from the fact that we can decompose $L_{\leq k'}(B^*)$ into $O(k')$ concave chains~\cite{chan05low_dimen_linear_progr_violat}, and that $r$ intersects each such chain at most twice in general position.
We can thus explicitly compute all the $O(nk')$ red-blue intersections in $L_{\leq k'}(B^*)$ in $O(nk' \log n)$ time, and by~\cref{lem:redBlueOptimum} these red-blue intersections contain the dual of a line used in an optimal wedge.
\end{proof}

Fix $\ell_1$ to be any candidate line from~\cref{lem:nkCandidateLines}. We wish to
find another line $\ell_2$ such that the wedge $\South(\ell_1,
\ell_2)$ misclassifies at most $k'$ blue points. Since all points above $\ell_1$ are outside the wedge regardless of the line $\ell_2$, we need to consider only the points $B^-$ and $R^-$
below $\ell_1$. Recall that the choice of $\ell_1$ already misclassifies $k_1$ blue points. Thus, we wish to find a line $\ell_2$ that misclassifies at most $k_2 = k' - k_1$ points from $B^-$ and $R^-$. 
That is, a line $\ell_2$ such that the number of points from $R^-$ below it plus the number of points from $B^-$ above it is at most $k_2$. This is exactly the halfplane separation problem with both
outliers, which we can solve using Chan's algorithm in $O((n + (k_2)^2)\log n)$ time~\cite{chan05low_dimen_linear_progr_violat}. Doing this for all $O(nk')$ candidate lines
results in an $O(nk' (n + k^2) \log n) = O((n^2 k' + n k'^3) \log n)$ time
algorithm. Below we improve on this, by avoiding to recompute $\ell_2$ from scratch every time.

\subparagraph{Solving the halfplane separation problem dynamically.}
Consider again the set of candidate lines of~\cref{lem:nkCandidateLines}, and in particular their dual points. By walking through the arrangement $L_{\leq k'}(B^*)$, we can visit all $O(nk')$ candidate points $\ell_1^*$ in $O(nk')$ steps, such that at each step we cross only one (red or blue) line. This means that only a single point is inserted in or deleted from the sets $B^-$ and $R^-$ per step. Rather than computing $\ell_2$ from scratch after every step now, we maintain it dynamically.

We build the data structure of~\cite{glazenburg2024dynamicsvm} that, given a value $k'$, maintains a line $\ell_2$ that misclassifies as few points from $R^-$ and $B^-$ as possible under insertions and deletions of red and blue points; if no line misclassifying at most $k'$ points exists, the data structure reports this.
The updates for the data structure have to be given in a 'semi-online' manner, which means that whenever a point is inserted we have to know when it is going to be deleted. This is not a problem in our case, since we can precompute all insertions and deletions on $R^-$ and $B^-$ by completing the walk through $L_{\leq k'}(B^*)$ before actually updating the data structure.

The data structure reports an optimal line $\ell_2$ that misclassifies $k_2 \leq k'$ points of $R^-$ and $B^-$, so the wedge $\South(\ell_1,\ell_2)$ then misclassifies $k_1 + k_2$ points. If $k_1 + k_2 \leq k'$ then we have found a wedge misclassifying at most $k'$ points, so we are done with the decision problem, otherwise we move on to the next candidate for $\ell_1$. 
If the data structure reports that there exists no line $\ell_2$ misclassifying at most $k'$ points, then we also move on to the next candidate.

The data structure has update time $O(k' \log^3 n)$ per insertion and deletion, and there are $O(nk')$ updates. Therefore, given a value $k'$, we can find a \South wedge with at most $k'$ outliers, if it exists, in $O(nk'^2 \log^3 n)$ time. Using exponential search for the optimal value $k$ then gives an optimal \South wedge in $O(nk^2 \log^3 n \log k)$ time. As in the previous section, we can similarly find \North, \West, and \East wedges.

\begin{theorem}
  \label{thm:single_wedge_both_outliers}
    Given two sets of $n$ points $B, R \subset \R^2$, we can construct a wedge $\W_B$ minimizing the total number of outliers $k$ in $O(nk^2 \log^3 n \log k)$ time.
\end{theorem}

\subparagraph{An output sensitive algorithm for minimizing $k_B$.} We
can use a very similar technique in the case where we allow only blue
outliers. We again study the decision problem for a given value $k_B'$,
where we wish to find a \South wedge containing no red points and all
and at most $k_B'$ blue points.  We then again use exponential search
to find the optimal value $k_B$.

We can use the same approach as above, i.e., constructing $L_{\leq k_B'}(B^*)$ and walking through it to find all candidate lines for $\ell_1$. At each step, we now wish to find a line
$\ell_2$ that lies below all red points $R^-$ and above as many blue
points $B^-$ as possible. This is the halfplane separation problem
with blue outliers, which we solved in $O(n \log n)$ time in~\cref{sec:Single_line_separation_one-sided}. Applying this algorithm to all $O(nk_B')$
candidates for $\ell_1$ results in an $O(n^2 k_B' \log n)$ time
algorithm for the decision problem, or an $O(n^2 k_B \log n \log k_B)$ time algorithm for minimizing $k_B$. Again we would like to dynamically maintain $\ell_2$, rather than recomputing it from scratch every step. The halfplane separation data structure from before~\cite{glazenburg2024dynamicsvm} cannot be used as is, since it allows both blue and red outliers. We show below how to
adapt (or rather, simplify) the data structure to allow and minimize only blue
outliers.

We work in the dual, where we wish to maintain a point $\ell_2^*$ that lies above all red lines $R^{-*}$ and below as many blue lines $B^{-*}$ as possible. The data structure maintains a concave chain decomposition of
$L_{\leq k_B'}(B^{-*})$, consisting of $O(k_B' \log n)$ chains covering
$L_{\leq k_B'}(B^{-*})$. Additionally, it maintains a convex chain decomposition of
$L_{\geq |R^{-*}|-k_B'-1}(R^{-*})$ (the $\leq k_B'$ level of $R^{-*}$ as seen from above). By~\cref{lem:redBlueOptimum} the optimum is one of the
$O(k_B'^2 \log^2 n)$ red-blue intersections between the two sets of chains.
Maintaining these intersections can be done in $O(k_B' \log^2 n)$ time. The original data structure uses a dynamic partition tree to maintain the best point among all these intersections.
However, this is not necessary in our case. Our optimum must lie on $\U(R^{-*})$ (the upper envelope of $R^{-*}$), so we can simply maintain $\U(R^{-*})$ rather than a convex chain
decomposition of $L_{\geq |R^{-*}|-k_B'-1}(R^{-*})$. This can be done in $O(\log^2 n)$ time using the data structure by Overmars and van Leeuwen~\cite{overmars1981dynamicCH}. This reduces the number of intersections between red and blue chains to only $O(k_B' \log n)$, since $L_{|R^{-*}| - 1}(R^{-*})$ intersects each concave blue chain at most twice. This means we can afford to iterate through all intersections after every update to find our
optimum, rather than using the dynamic partition tree as in the original
version. This reduces the update time from $O(k_B' \log^3 n)$ to $O(k_B' \log^2 n)$.

Iterating through all $O(nk_B')$ candidate lines for $\ell_1$ and dynamically maintaining an optimal $\ell_2$ at every step thus gives us an $O(nk_B'^2 \log^2 n)$ time algorithm to solve the decision problem for a given value $k_B'$. Using exponential search for the optimal value $k_B$ then gives us our result:

\begin{theorem}
  \label{thm:single_wedge_blue_outliers}
Given two sets of $n$ points $B, R \subset \R^2$, we can construct a wedge $\W_B$ minimizing the number of blue outliers $k_B$ in $O(nk_B^2\log^2 n \log k_B)$ time.
\end{theorem}

\section{Separation with a double wedge}
\label{sec:double_wedge}
In this section, we consider the case when $\W_B$ is a double wedge. When we do not allow blue outliers (i.e., when we are minimizing $k_R$), there are only $\Theta(n)$
relevant double bowtie wedges w.r.t. $B$ (pairs of antipodal faces of arrangement $B^*$). In contrast, there can be
$\Theta(n^2)$ relevant hourglass type double wedges (all faces of arrangement $B^*$). Hence, dealing
with hourglass wedges is harder: we give an $O(n^2)$ algorithm for finding a bowtie wedge $\W_B$ while minimizing $k_R$, but an $O(n^2 \log n)$ algorithm for finding an hourglass wedge $\W_B$ while minimizing $k_R$ (note that this second problem is equivalent to finding a bowtie wedge $\W_B$ while minimizing $k_B$, after swapping the colors of $R$ and $B$).

Bertschinger
\etal~\cite{bertschinger22inter_doubl_wedge_arran} observed similar
behavior when dealing with intersections of double wedges. In case we
have some lower bound on the interior angle $\alpha\pi \leq \pi$ of our double wedge $\W_B$, we refer to such double wedges as
\emph{$\alpha$-double wedges}, we can rotate the plane by $i\alpha\pi$,
for $i \in 0,..,O(1/\alpha)$, and run our (faster) bowtie algorithm each
of them. In at least one of these copies the optimum $\W_B$ is a bowtie
type wedge. This then leads to an $O(n^2/\alpha)$ time algorithm for
minimizing $k_R$. If we do not have any bound on $\alpha$, it is not clear how to find a rotation that turns $\W_B$ into a bowtie
type double wedge, and thus finding any double wedge $\W_B$ (hourglass or bowtie) while minimizing $k_R$ takes $O(n^2 \log n)$ time.

In Section~\ref{sub:Hourglass_wedge_separation_with_both_outliers} we describe an output-sensitive algorithm when both types of outliers are allowed, similar to \cref{sub:single_wedge_both_outliers}. 

\subsection{Bowtie wedge separation with red outliers}
\label{sub:bowtie_red}

We work in dual space, where a bowtie wedge containing all of $B$ dualizes to a line segment intersecting all lines of $B^*$.
Any line of $R^*$ that is also intersected corresponds to a red point in the double wedge, which is an outlier.
Hence we focus on computing a segment that intersects all lines of $B^*$ and as few of $R^*$ as possible.

Observe that the only segments intersecting all lines of $B^*$ have their endpoints in antipodal outer faces of $\A(B^*)$. We can construct the outer faces in $O(n \log n)$ time~\cite{bringmann22computing_zone}, since the outer faces are the zone of the boundary of a sufficiently large rectangle (one that contains all vertices of the arrangement).
With the outer faces constructed, we can apply a very similar algorithm to the one in Section \ref{sub:Single_wedge_with_red_outliers} on each pair of antipodal faces (where for the parameter space, lines of type $c$ and $d$ add two forbidden quadrants rather than one).
This gives an $O(n^2 \log n)$ time algorithm for bowtie double wedge classification with red outliers.

Considering the running time is super-quadratic, we opt to construct the entire arrangement $\A(B^* \cup R^*)$ of all lines explicitly.
This takes $O(n^2)$ time (see e.g.~\cite{deberg08computational_geometry}), and as we show next, allows us to shave off a logarithmic factor.

Let $P, Q$ be the boundary chains of a pair of antipodal outer faces of $\A(B^*)$, made up of a total of $m$ edges.
We assume for ease of exposition that $P$ and $Q$ are separated by the $x$-axis, with $P$ above and $Q$ below the axis.
We distinguish between two types of red lines: \emph{splitting lines} and \emph{stabbing lines}.
Splitting lines intersect both $P$ and $Q$, while stabbing lines intersect at most one of $P$ and $Q$. Note that a line is a splitting line for exactly one pair of antipodal faces, but can be a stabbing line for multiple pairs of antipodal faces. For two points $p$ and $q$, let $\stab(p,q)$ be the number of stabbing lines that $\ol{pq}$ intersects, and similarly define $\splt(p,q)$.
Let $s$ be the number of splitting lines for the pair of faces $P,Q$.

\begin{lemma}
    We can construct a line segment with endpoints on $P$ and $Q$ that intersects as few red lines as possible in $O(s^2 + m + n)$ time.
\end{lemma}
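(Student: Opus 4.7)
The plan is to parameterize candidate segments by their arc-length endpoints on $P$ and $Q$, giving a parameter space $\Pi = [0,|P|]\times[0,|Q|]$, and to minimize $f(p,q) = \stab(p,q) + \splt(p,q)$ over $\Pi$. I would treat the two summands separately: the stabbing summand turns out to be additively separable in $p$ and $q$, while the splitting summand is constant on cells of a small rectilinear subdivision of $\Pi$.

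For each stabbing line $\ell$ there are three subcases: $\ell$ separates $P$ from $Q$ and thus contributes $1$ to $\stab(p,q)$ for every $(p,q)$; $\ell$ meets only $P$, in which case the set of $p$ for which $\ol{pq}$ crosses $\ell$ is an interval (or a union of two intervals, when $\ell$ crosses $P$ twice), independent of $q$; or the symmetric case for $Q$. Summing the contributions yields $\stab(p,q) = g(p) + h(q) + c_0$ for a constant $c_0$ and piecewise-constant functions $g,h$ on the chains $P$ and $Q$ with $O(n)$ breakpoints in total; all breakpoints can be read off the already-computed arrangement $\A(B^* \cup R^*)$ by walking $P$ and $Q$ in $O(m+n)$ time. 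For each splitting line $\sigma$, both $P$ and $Q$ are crossed only $O(1)$ times, so the forbidden region of $\sigma$ in $\Pi$ is bounded by $O(1)$ vertical lines (at its $P$-crossings) and $O(1)$ horizontal lines (at its $Q$-crossings). The $s$ splitting lines together induce a rectilinear subdivision of $\Pi$ of complexity $O(s^2)$ inside each cell of which $\splt$ is constant; a single sweep of the subdivision computes $\splt$ per cell with $O(1)$ updates per cell, in $O(s^2)$ total time.

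To combine, within a cell $I \times J$ the separability of $\stab$ gives $\min_{(p,q)\in I\times J} f(p,q) = \min_{p\in I} g(p) + \min_{q\in J} h(q) + c_0 + v_{I,J}$, where $v_{I,J}$ is the (constant) value of $\splt$ on that cell. Precomputing $\min g$ over each of the $O(s)$ $p$-columns in one sweep of $P$ takes $O(|P|+n)$ time, and analogously $\min h$ over the $O(s)$ $q$-rows takes $O(|Q|+n)$ time; together $O(m+n)$. A final pass over the $O(s^2)$ cells then picks the optimum cell and outputs a witness segment, for an overall running time of $O(s^2 + m + n)$. The main obstacle will be the geometric case analysis—confirming that every stabbing line contributes only $O(1)$ forbidden intervals to $g$ or $h$, and every splitting line only $O(1)$ axis-parallel boundary pieces in $\Pi$, even in the delicate subcase when a line crosses one of the convex chains twice—so that the parameter-space structures really have the sizes claimed and the accompanying sweeps can be implemented in the stated time.
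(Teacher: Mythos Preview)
Your proposal is correct and follows essentially the same approach as the paper. Both arguments exploit that (i) the stabbing contribution is additively separable in the two endpoints, so the best endpoint on each side can be found independently by a single walk along $P$ and $Q$ in the precomputed arrangement in $O(m+n)$ time, and (ii) the splitting contribution is constant on each cell of the $(s{+}1)\times(s{+}1)$ grid induced by the splitting lines, so all $O(s^2)$ values can be filled in by a simple recurrence/sweep. The paper states the separability more directly via the identity $\stab(p,q)=n-s-x_p-y_q$ (with $x_p$ the number of stabbing lines above $p$ and $y_q$ the number below $q$), whereas you reach the same conclusion through a short case analysis on how a stabbing line meets $P$ or $Q$; either way the algorithm and its analysis are the same.
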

\begin{proof}
    See Figure \ref{fig:doubleWedgeRedOutliers} for an illustration. The $s$ splitting lines partition $P$ and $Q$ into $s+1$ chains each.
    Let $P_0, \dots, P_{s}$ be the chains partitioning $P$ and let $Q_0, \dots, Q_{s}$ be the chains partitioning $Q$, both in clockwise order along $P$ and $Q$.
    Consider some pair of chains $P_i, Q_j$. Note that all segments starting in $P_i$ and ending in $Q_j$ intersect the same number of splitting lines, i.e., $\forall p_1, p_2 \in P_i, \forall q_1, q_2 \in Q_j: \splt(p_1,q_1) = \splt(p_2,q_2) = \splt(P_i,Q_j)$. Therefore the best segment from $P_i$ to $Q_j$ is the one that intersects the fewest stabbing lines. For points $p \in P_i, q \in Q_j$, let $x_p$ be the number of stabbing lines above $p$, and $y_q$ the number of stabbing lines below $q$, and note that $\stab(p,q) = n - s - x_p - y_q$. Thus, the best segment from $P_i$ to $Q_j$ is $\ol{p_iq_j}$, where $p_i = \argmax_{p \in P_i} x_p$, and $q_j = \argmax_{q \in Q_j} y_q$. Note that $p_i$ does not depend on $Q_j$, and vice versa.
    
    We compute these points $p_i$ for all chains $P_i$ (and symmetrically $q_j$ for all chains $Q_j$) as follows. We move a point $p$ clockwise along $P$ in the arrangement $\A(B^* \cup R^*)$, maintaining $x_p$, as well as $p_{\max}$ and $x_{p_{\max}}$, the (point attaining the) maximum value of $x_p$ encountered so far on the current chain $P_i$. When we cross a stabbing line $\ell$ we increment or decrement $x_p$, depending on the slope of $\ell$.
    Specifically, if the slope of $\ell$ is greater than the slope of the edge of $P$ we are currently on then we decrement $x_p$, otherwise we increment it. When we reach the end of chain $P_{i}$, i.e., when we cross a splitting line or reach the end of $P$, we set $p_i = p_{\max}$, and reset $p_{\max}$ and $x_{p_{\max}}$.
    This procedure takes $O(m + n)$ time.
    
    For each segment $\ol{p_i q_j}$, we now know $\stab(p_i, q_j) = n - s - x_{p_i} - y_{q_j}$.
    Next we show that we can compute the number of splitting lines $\splt(P_i, Q_j)$ intersected by segments with endpoints on $P_i$ and $Q_j$, for all pairs, in $O(s^2)$ time.
    For this we use dynamic programming.
    Let $\ell_i$ be the splitting line in between chain $P_i$ and $P_{i+1}$.
    We compute the number of splitting lines $\splt(P_i, Q_i)$ between each pair of chains with the following recurrence (recall that the partitionings of $P$ and $Q$ are in clockwise order):
    \begin{align*}
        \splt(P_i, Q_j) &=
        \begin{cases}
            s - j                       & \text{if $i = 0$,}\\
            \splt(P_{i-1}, Q_{j}) + 1   & \text{if segment $\ol{p_{i} q_j}$ intersects $\ell_{i-1}$,}\\
            \splt(P_{i-1}, Q_{j}) - 1   & \text{if segment $\ol{p_{i} q_j}$ does not intersect $\ell_{i-1}$.}
        \end{cases}
    \end{align*}
    We compute the $O(s^2)$ values for $\splt(P_i, Q_j)$ in $O(s^2)$ time with dynamic programming.
    Having computed both $\stab(p_i, q_j)$ and $\splt(P_i, Q_j)$ for all pairs of chains, we compute the pair minimizing $\stab(p_i, q_j) + \splt(P_i, Q_j)$ in $O(s^2)$ additional time by iterating through them.
    The segment connecting this pair intersects the fewest red lines.
\end{proof}

\begin{figure}[tb]
    \centering
    \includegraphics{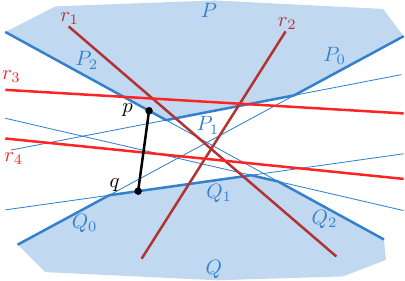}
    \caption{Two antipodal faces $P$ and $Q$, with two splitting lines $r_1, r_2$ and two stabbing lines $r_3,r_4$, and an optimal segment $\ol{pq}$ from $P$ to $Q$.}
    \label{fig:doubleWedgeRedOutliers}
\end{figure}

There are $O(n)$ pairs of antipodal blue faces $P$ and $Q$.
For the $x^{\mathrm{th}}$ pair, let $m_x$ be their total complexity and $s_x$ be the number of splitting lines.
We apply the above algorithm to each pair, leading to total time $O(\sum_x (s_x^2 + m_x + n))$.
The total complexity of all outer faces is $O(n)$, and a red line is a splitting line for exactly one pair of antipodal faces.
Hence the total running time simplifies to $O(n^2)$.

\begin{lemma}
\label{lem:bowtie_wedge_red_outliers}
    Given two sets of $n$ points $B, R \subset \R^2$, we can construct the bowtie double wedge $\W_B$ minimizing the number of red outliers $k_R$ in $O(n^2)$ time.
\end{lemma}

\subsection{Bowtie wedge separation with blue outliers}
\label{sub:bowtie_blue}

Again we work in dual space, where a bowtie double wedge $\W_B$ containing some of $B$ and none of $R$ dualizes to a line segment intersecting some lines of $B^*$ and none of $R^*$.
Any line of $B^*$ that is not intersected corresponds to a blue point not in $\W_B$, which is an outlier.
Hence we focus on computing a segment that intersects the most lines of $B^*$, while intersecting none of $R^*$.

Observe that the only segments intersecting no line of $R^*$ lie completely inside a face of $\A(R^*)$.
We construct the arrangements $\A(R^*)$ and $\A(B^* \cup R^*)$ in $O(n^2)$ time~\cite{deberg08computational_geometry}.

Consider a face $F$ of $\A(R^*)$.
We wish to compute the segment in $F$ that intersects the most blue lines, and which hence has the fewest blue outliers of any segment in $F$.
W.l.o.g. we only consider segments with endpoints on the boundary $P$ of $F$, since we can always extend a segment without introducing blue misclassifications.

Let $B^*_P$ be the set of blue lines intersecting $P$, which we report by scanning over $P$ inside the arrangement $\A(B^* \cup R^*)$.
This takes $O(|P| + |B^*_P|)$ time.
We reuse the parameter space tool from Section \ref{sub:Single_wedge_with_red_outliers}.
Fix an arbitrary point $o$ on $P$ and parameterize over $P$ in clockwise order, with $P(0) = P(1) = o$.
For a given blue line $b$ intersecting $P$ in points $P(b_1), P(b_2)$ with $b_1 < b_2$, a segment $\ol{P(\ell_1^*) P(\ell_2^*)}$ intersects $b$ if and only if $\ell_1^* \in [b_1, b_2]$ and $\ell_2^* \in [0, b_1] \cup [b_2, 1]$, or $\ell_1^* \in [0, b_1] \cup [b_2, 1]$ and $\ell_2^* \in [b_1, b_2]$.
This results in four blue intersection-regions in the parameter space, as in Figure \ref{fig:parameter_space_single_face}.

\begin{figure}[tb]
    \centering
    \includegraphics{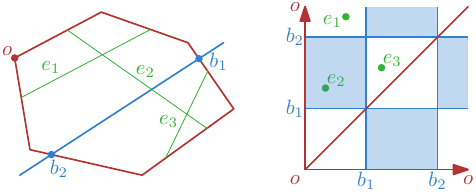}
    \caption{A face $F$ with its parameter space and the forbidden regions induced by the blue line.
    }
    \label{fig:parameter_space_single_face}
\end{figure}

We compute the intersection values $x_1, x_2$ for all lines in $B^*_P$ by scanning over $P$ in $\A(B^* \cup R^*)$, as we did for reporting all lines in $B^*_P$.
The segment $\ol{P(\ell_1^*) P(\ell_2^*)}$ intersecting the most blue lines corresponds to the point $(\ell_1^*, \ell_2^*)$ in the parameter space with maximum ply in the blue rectangles.
Similar to Lemma \ref{lem:minimum_ply}, where we compute the minimum ply point in a set of rectangles, we can compute the maximum ply point in this set of $|B^*_P|$ rectangles in $O(|B^*_P| \log |B^*_P|)$ time.

The total complexity of the sets $B^*_P$, over all faces of $\A(R^*)$, is at most the complexity of $\A(B^* \cup R^*)$, which is $O(n^2)$.
We therefore obtain a total running time of $O(n^2 \log n)$.

\begin{lemma}
\label{lem:bowtie_wedge_blue_outliers}
    Given two sets of $n$ points $B, R \subset \R^2$, we can construct a bowtie double wedge $\W_B$ minimizing the number of blue outliers $k_B$ in $O(n^2 \log n)$ time.
\end{lemma}

By combining~\cref{lem:bowtie_wedge_red_outliers,lem:bowtie_wedge_blue_outliers} we obtain the following result regarding double wedge classification with one-sided outliers:

\begin{theorem}
\label{thm:double_wedge_one-sided}
    Given two sets of $n$ points $B, R \subset \R^2$, we can construct two double wedges $\W_B$ and $\W_R$ that correctly classify $B$ and $R$, respectively, while minimizing the number of outliers of the other color, in $O(n^2 \log n)$ time.
\end{theorem}

\subsection{Double wedge separation with both outliers}
\label{sub:Hourglass_wedge_separation_with_both_outliers}
Consider now the problem of finding a double wedge that minimizes the
total number of outliers $k$. We show how to find an optimal hourglass
wedge, and by recolouring we can also find an
optimal bowtie wedge. We again consider the decision version of this problem: given a value $k'$, does there exist an hourglass wedge with at most $k'$ outliers? We then use exponential search to find the optimal value $k$. We use an approach similar to
Section~\ref{sub:single_wedge_both_outliers}, by considering a set of candidate options for the first line
$\ell_1$ and computing an optimal corresponding $\ell_2$ for each. Since the
choice of $\ell_1$ by itself does not misclassify any points yet, we have to consider all $O(n^2)$ red-blue intersections of the
arrangement of $B^*$ and $R^*$. 

\begin{figure}[tb]
    \centering
    \includegraphics{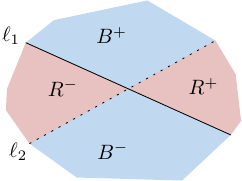}
    \caption{If we want $B$ to lie in the \North and \South wedges, then $B^+$ and $R^-$ should be above $\ell_2$, and $B^-$ and $R^+$ should be below $\ell_2$.}
    \label{fig:hourglassBothOutliers}
\end{figure}
 
For a fixed line $\ell_1$, again let $B^+$ and $R^+$ be the points
above $\ell_1$, and $B^-$ and $R^-$ the points below $\ell_1$. Recall that we want $B$ to be inside the \North and \South wedges, so we
wish to find a line $\ell_2$ that maximizes the number of points of
$B^+$ and $R^-$ above $\ell_2$ plus the number of points of $B^-$ and
$R^+$ below $\ell_2$.
See~\cref{fig:hourglassBothOutliers}. Let $P = B^+ \cup R^-$ and $Q =
B^- \cup R^+$. For a fixed line $\ell_1$, the best line $\ell_2$ is then the line that separates $P$ and $Q$ with the fewest outliers, with $P$ above $\ell_2$ and $Q$ below $\ell_2$. (This can also be seen as flipping the colors of all points below $\ell_1$.)

We can use Chan's algorithm for halfplane
separation with both outliers to solve this problem in
$O((n + k'^2)\log n)$ time~\cite{chan05low_dimen_linear_progr_violat}. Doing so for all $O(n^2)$ candidate lines $\ell_1$ would give an 
$O(n^2 \cdot (n + k'^2)\log n) = O((n^3 + n^2 k'^2)\log n)$ time algorithm.
Again, we can improve this time by maintaining the sets $P$ and $Q$, and an optimal line $\ell_2$ separating them, dynamically. By walking through $\A(R^* \cup B^*)$ we can visit all $O(n^2)$ candidate lines for $\ell_1$ in $O(n^2)$ steps, such that at every step a single point is inserted in or deleted from $P$ or $Q$. The same data structure used
in~\cref{sub:single_wedge_both_outliers} for dynamic halfplane separation~\cite{glazenburg2024dynamicsvm} suffices to maintain $\ell_2$. This data structure supports updates in $O(k' \log^3 n)$ time, giving us an $O(n^2 k' \log^3 n)$ time algorithm to solve the decision problem, i.e., to find if there exists an hourglass wedge with at most $k'$ outliers. Using exponential search to find the optimal value $k$ then gives us our result:

\begin{theorem}
  \label{thm:double_wedge_both_outliers}
  Given two sets of $n$ points $B, R \subset \R^2$, we can construct a
  double wedge $\W_B$ minimizing the total number of outliers $k$ in
  $O(n^2 k \log^3 n \log k)$ time.
\end{theorem}

\section{Concluding Remarks}
\label{sec:Concluding_Remarks}

We presented efficient algorithms for robust bichromatic classification of $R \cup B$ with at most two lines. Our results depend on the shape of the region containing (most of the) blue points $B$, and whether we wish to minimize the number of red outliers, blue outliers, or both. See Table~\ref{tab:overview_results}. Many of our algorithms reduce to the problem of computing a point with minimum ply with respect to a set of regions. We can extend these algorithms to support weighted regions, and thus we may support classifying weighted points (minimizing the weight of the misclassified points). It is interesting to see if we can support other error measures as well. 

There are also still many open questions. The most prominent questions are wheter we can design faster algorithms for the algorithms minimizing the total number of outliers $k$, in particular for the wedge and double wedge case. For the strip case, the running time of our algorithm $O(n^2 \log n)$ matches the worst case running time for halfplanes ($O((n + k^2)\log n)$, which is $O(n^2\log n)$ when $k = O(n)$), but it would be interesting to see if we can also obtain algorithms sensitive to the number of outliers $k$. Furthermore, it would be interesting to establish lower bounds for the various problems. In particular, are our algorithms for computing a halfplane minimizing $k_R$ optimal, and in case of wedges (where the problem is asymmetric) is minimizing the number of blue outliers $k_B$ really more difficult then minimizing $k_R$?

\bibliography{bibliography}

\end{document}